\newtheorem{theorem}{Theorem}
\newtheorem{props}{Proposition}
\theoremstyle{definition}
\newtheorem{defn}{Definition}
\newcommand{\sign}[1]{{\rm sign}\left(#1 \right)}
\newcommand{\Exp}{{\mathbb{E}}}
\newcommand{\dd}{{\rm d}}
\newcommand{\SEve}{\mathcal{S}_{\rm Eve}}
\newcommand{\SSteve}{\mathcal{S}_{\rm Steve}}
\newcommand{\Pro}{\mathbb{P}}
\newcommand{\asimeq}{\,{\stackrel{{\scriptscriptstyle{n \rightarrow \infty}}}{\displaystyle\simeq}}\,}
\newcommand{\ie}{\emph{i.e.}}
\newcommand{\eg}{\emph{e.g.}}
\renewenvironment{proof}[1][\proofname]{{\noindent \bfseries #1.}}{\qed\vspace{2pt} \mbox{}

}
\newcommand{\mytitle}{On Known-Plaintext Attacks to a Compressed Sensing-Based Encryption: A Quantitative Analysis}
\begin{document}
\title{On Known-Plaintext Attacks to a Compressed Sensing-Based Encryption: A Quantitative Analysis}

\author{
\IEEEauthorblockN{
Valerio Cambareri, \IEEEmembership{Student Member, IEEE}, Mauro Mangia, \IEEEmembership{Member, IEEE}, \\ Fabio Pareschi, \IEEEmembership{Member, IEEE}, Riccardo Rovatti, \IEEEmembership{Fellow, IEEE}, Gianluca Setti, \IEEEmembership{Fellow, IEEE}}

\thanks{Copyright \copyright 2015 IEEE. Personal use of this material is permitted. However, permission to use this material for any other purposes must be obtained from the IEEE by sending a request to \url{pubs-permissions@ieee.org}.}
\thanks{V. Cambareri and R. Rovatti are with the Department of Electrical, Electronic and Information Engineering (DEI), University of Bologna, Italy (e-mail: \url{valerio.cambareri@unibo.it}, \url{riccardo.rovatti@unibo.it}).}%
\thanks{M. Mangia is with the Advanced Research Center on Electronic Systems (ARCES), University of Bologna, Italy (e-mail: \url{mmangia@arces.unibo.it}).}%
\thanks{F. Pareschi and G. Setti are with the Engineering Department in Ferrara (ENDIF), University of Ferrara, Italy (e-mail: \url{fabio.pareschi@unife.it}, \url{gianluca.setti@unife.it}).}%
}

\markboth{IEEE Transactions on Information Forensics and Security}{Cambareri \MakeLowercase{\emph{et al.}:} \textsc{\mytitle}}
\maketitle

\begin{abstract}
Despite the linearity of its encoding, compressed sensing may be used to provide a limited form of {data protection} when random encoding matrices are used to produce sets of low-dimensional measurements (ciphertexts). 
In this paper we quantify by theoretical means the resistance of the least complex form of this kind of encoding against known-plaintext attacks. For both standard compressed sensing with antipodal random matrices and recent multiclass encryption schemes based on it, we show how the number of candidate encoding matrices that match a typical plaintext-ciphertext pair is so large that the search for the true encoding matrix inconclusive. Such results on the practical ineffectiveness of known-plaintext attacks underlie the fact that even closely-related signal recovery under encoding matrix uncertainty is doomed to fail.

Practical attacks are then exemplified by applying compressed sensing with antipodal random matrices as {a multiclass} encryption scheme to signals such as images and electrocardiographic tracks, showing that the extracted information on the true encoding matrix from a plaintext-ciphertext pair leads to no significant signal recovery quality increase. This theoretical and empirical evidence clarifies that, although not perfectly secure, both standard compressed sensing and multiclass encryption schemes feature a noteworthy level of security against known-plaintext attacks, therefore increasing its appeal as a negligible-cost encryption method for resource-limited sensing applications.
\end{abstract}

\begin{IEEEkeywords}
Compressed sensing, encryption, security, secure communications
\end{IEEEkeywords}

\IEEEpeerreviewmaketitle

\section{Introduction}
\IEEEPARstart{T}{his} paper elaborates on the possibility of exploiting Compressed Sensing (CS) \cite{Donoho_2006, Candes_2008} {not only to reduce the resource requirements for signal acquisition, but also to protect the acquired data so that their information is hidden from unauthorised receivers}. A number of prior analyses \cite{rachlin2008secrecy,orsdemir2008security, cambareri2013twoclass, barcelo2014amplify,  cambareri2014low} show that, although the encoding performed by CS cannot be regarded as perfectly secure, practical encryption is still provided at a very limited cost, either at the analog-to-digital interface or immediately after it, in early digital-to-digital processing stages.

Such a lightweight encryption scheme may be particularly beneficial to acquisition systems within the framework of wireless sensor networks \cite{akyildiz2002survey} where large amounts of data are locally acquired by sensor nodes with extremely tight resource budgets, and afterwards transmitted to a remote node for further processing. When the security of these transmissions is an issue, low-resource techniques that help balancing the trade-off between encryption strength and computational cost may offer an attractive design alternative to the deployment of separate conventional encryption stages.

An encryption scheme based on CS leverages the fact that, in its framework, a high-dimensional signal is encoded by linear projection on a random subspace, thus producing a set of low-dimensional measurements.
These can be mapped back to the acquired signal only under prior assumptions on its {\em sparsity} \cite{candes2007sparsity} and a careful choice of random subspaces such as those defined by {antipodal} random {(also known as Bernoulli random \cite{candes2006near, haboba2012a})} encoding matrices. 
In addition, suitable {\em sparse signal recovery algorithms} \cite{Candes_2005, rangan2011generalized, van2011sparse} are required to decode the original signal. These must be applied with an exact knowledge of the subspace on which the signal was projected. In complete absence of this information the acquired signal is unrecoverable. Hence, this subspace may be {generated from} {a shared secret} between the transmitter and intended receivers that enables their high-quality signal recovery. 

If, on the other hand, the above subspace is only {\em partially} known, a low-quality version of the signal may be recovered from its measurements, with a degradation that increases gracefully with the amount of missing information on the projection subspace. 
By exploiting this effect, multiclass encryption schemes were devised \cite{cambareri2013twoclass, cambareri2014low} in which {\em high-class users} are able to decode high-quality information starting from a complete knowledge of the shared secret, while {\em lower-class users} only recover a low-quality approximation of the acquired signal starting from partial knowledge of the secret.
In order to take full advantage of this scheme, its security must be quantitatively assessed against potential cryptanalyses. 
The theoretical and empirical evidence provided in \cite{cambareri2014low} dealt with statistical attacks on the measurements produced by universal random encoding matrices \cite{candes2006near}.

In this paper we address the resistance of {an embodiment of} CS against Known-Plaintext Attacks (KPAs), \ie, in threatening situations where a malicious eavesdropper has gained access to an instance of the signal ({plaintext}) 
 and its corresponding random measurements ({ciphertext}),
 and from this information tries to infer the corresponding instance of {an antipodal} random encoding matrix.  
KPAs are more threatening than attacks solely based on observing the ciphertext. 
Yet, we will show how both simple and multiclass encryption based on CS exhibit a noteworthy level of resistance against this class of attacks due to the nature of the encoding.

The paper is organised as follows. In Section \ref{sec:atcdhs} we briefly review the fundamentals of CS and multiclass encryption in the two-class case, which distinguishes between {\em first-class receivers} authorised to reconstruct the signal with full quality and {\em second-class receivers} with reduced decoding quality.

Section \ref{sec:kpa} describes KPAs as delivered both by eavesdroppers and second-class receivers who aim at improving the quality of their signal recovery. 
There, it is shown that the expected number of candidate solutions matching a plaintext-ciphertext pair is enormous, thus implying that finding the true encoding matrix among such a huge solution set is practically infeasible. 
{To extend this analysis, we also attack the two-class encryption scheme by using recovery algorithms that compensate encoding matrix perturbations \cite{zhu2011sparsity, parker2011compressive} as suffered by a second-class receiver. Their performances are shown to be equal to a standard decoding algorithm \cite{rangan2011generalized} that does not attempt such compensation, \ie, that legitimately recovers the acquired signal at the prescribed quality level.}

In Section \ref{sec:ae} the previous KPAs are exemplified for electrocardiographic tracks (ECG) and images containing sensitive identification text. For all these cases we give empirical evidence on how, even in favourable attack conditions, the encoding matrices produced by KPAs perform poorly when trying to decode any further ciphertext. Theoretical and empirical evidence allows us to conclude that compressed sensing-based encryption, albeit not perfectly secure \cite{rachlin2008secrecy}, provides some security properties and defines a framework in which their violation is non-trivial. 
The Appendices report the proofs of the Propositions and Theorems given in Section \ref{sec:kpa}.

{\subsection{Relation to Prior Work}
\noindent To prove how CS and multiclass encryption provide a satisfying level of privacy even against informed attacks, this work addresses the problem of finding all the instances of {an antipodal} random encoding matrix that map a known plaintext to the corresponding ciphertext, when both quantities are deterministic and digitally represented. Our analysis hinges on the connection between linear encoding by {antipodal} random matrices, the subset-sum problem \cite{martello1990knapsack} and its expected number of solutions \cite{sasamoto2001statistical}. 
While the authors of \cite{rachlin2008secrecy} proved how CS lacks perfect secrecy in the Shannon sense \cite{Shannon_SecrecySystems_1949}, both \cite{rachlin2008secrecy} and \cite{orsdemir2008security} contrasted this with computational security evidence substantially based on brute-force attacks. Our improvement in the specific, yet practically important case of antipodal random encoding matrices is in that our analysis predicts how the expected number of candidate solutions to a KPA varies with the plaintext dimensionality and its digital representation. 

In addition, we evaluate specific attacks to multiclass encryption by CS in the case of lower-class users attempting to upgrade their recovery quality. To assess the resistance of this strategy against KPAs, we apply a similar theoretical analysis. Then, we extend the attacks to include sparse signal recovery under matrix uncertainty \cite{zhu2011sparsity, parker2011compressive} based on the idea that missing information \cite{loh2012high}, perturbations \cite{herman2010general, cambareri2015average} and basis mismatches \cite{chi2011sensitivity} could be partially compensated, although we verify that is not the case with the random perturbation entailed by multiclass encryption.
}

\section{Multiclass Encryption by Compressed Sensing}
\label{sec:atcdhs}

\subsection{A Brief Review of Compressed Sensing}
\label{prelims}
\noindent The encryption schemes we consider in this paper are based on Compressed Sensing (CS) \cite{Donoho_2006, Candes_2008}, a mathematical framework in which a signal represented by a vector $x\in\mathbb{R}^n$ is acquired by applying a linear, dimensionality-reducing transformation $A: \mathbb{R}^n \rightarrow \mathbb{R}^m$ (\ie, the {\em encoding matrix}) to generate a vector of {\em measurements} $y = A x, y \in\mathbb{R}^m, m < n$.
To enable the recovery of $x$ given $y$, CS leverages the fact that $x$ is known to be {\em sparse} in a proper basis $D$, \ie, for any instance of $x$ its representation is $x=D s$ where $s\in\mathbb{R}^n$ has a number of non-zero entries at most $k \ll n$. {The results presented in this paper are independent of $D$, which we consider an orthonormal basis for the sake of simplicity}. In addition, the encoding matrix $A$ must obey some information-preserving guarantees \cite{candes2008restricted, Donoho_2010} that we assume verified throughout this paper and essentially impose that $m = \mathcal{O}(k \log n)$. The most relevant fact here is that when $A$ is a typical realisation of a random matrix with independent and identically distributed (i.i.d.) entries following a subgaussian distribution \cite{baraniuk2008simple} we are reassured that signal recovery is possible regardless of the chosen basis $D$. In fact, some signal recovery algorithms exist for which guarantees can be given with very high probability \cite{Candes_2005} along with an ever-growing plethora of fast iterative methods capable of reconstructing $x$ starting from $y$, $A$ and $D$. An essential decoding scheme is the convex optimisation problem known as {\em basis pursuit with denoising},
\begin{equation}
\label{eq:bpdn}
\hat{x} = \mathop{\text{arg} \min}_{\xi \in \mathbb{R}^n}\;\; \left\lVert D^{-1} \xi \right\rVert_{1}\;\;\;\;
\text{s.t.}\;\;  \left\lVert A \xi-y\right\rVert_2 \le \omega\tag{BPDN}
\end{equation}
where the $\ell_1$-norm in the objective function promotes the sparsity of $\hat{x}$ with respect to $D$, while the $\ell_2$-norm constraint enforces its fidelity to the measurements up to a threshold $\omega \geq 0$ that accounts for noise sources.
In particular, we here concentrate on operators $A \in \{-1,1\}^{m \times n}$ that are realisations of {an antipodal} random matrix with i.i.d. entries and equiprobable symbols $\{-1,1\}$ \cite{candes2006near}; such matrices are known to verify the above guarantees, and are remarkably ($i$) simple, and therefore suitable to be generated, implemented and stored in digital devices ($ii$) random in nature, thus suggesting the possibility of exploiting such randomness to generate an encryption mechanism using the linear encoding scheme of CS. Due to their {limited set of possible symbols $\{-1, 1\}$}, {such antipodal random} matrices are more easily subject to cryptanalysis; for this reason, we tackle them as a baseline for {those defined by a larger set of symbols}.

\subsection{Security and Two-Class Encryption by Compressed Sensing}
\label{sec:twoclcs}
\subsubsection{A Security Perspective}
\label{sec:seccs}
the knowledge of $A$ is necessary in the recovery of $x$ from $y$, since any error in its entries reflects on the quality of the recovered signal \cite{herman2010general}. A number of security analyses leveraging this fundamental fact were introduced \cite{rachlin2008secrecy, orsdemir2008security, cambareri2014low} in which CS is regarded as a symmetric encryption scheme, where the {\em plaintext} $x$ is mapped to the {\em ciphertext} $y$ by means of the linear transformation operated by $A$, \ie, the {\em encryption algorithm}. {The ciphertext is then stored or transmitted, and its intended receivers may decrypt $x$ by knowing $y$, the sparsity basis $D$, and by having a prior agreement on the {\em encryption key} or {\em shared secret} that is necessary to reproduce $A$.}

{
The ideal requirement for a secure application of CS (as noted in \cite{drori_compressed_2008,rachlin2008secrecy}) is that any encoding matrix instance is used for {at most} one plaintext-ciphertext pair; this implies the use of a potentially {\em infinite} sequence of encoding matrices $\{A^{[t]}\}_{t \in \mathbb{N}}$. In violation of this {\em non-repeatability} hypothesis, each $A^{[t]}$ could be simply recovered by collecting $n$ linearly independent plaintext-ciphertext pairs related by it, \ie, by solving a linear system of equations with the $m n$ entries of $A^{[t]}$ as the unknowns.}

{In practice, the encoding matrices are obtained by algorithmic expansion of the shared secret, \eg, by using the key as the seed of a pseudo-random number generator (PRNG) which outputs a reproducible bitstream. Due to its deterministic and finite-state nature, this stream yields a {\em periodic} sequence of encoding matrices $\{A^{[t \mod{P}]}\}_{t \in \mathbb{N}}$ repeating with period $P$, where each $A^{[t]}$ is obtained by mapping $m n$ distinct bits to antipodal symbols. 

Thus, the non-repeatability hypothesis will be granted by a system-level choice of an encryption key and PRNG that makes $P$ large enough to exceed any reasonable observation time.}

{However, such pseudo-random bitstreams may themselves be vulnerable to cryptanalysis if a few of their bits are exposed. As a simple example of this threat, assume that the encoding matrices are generated by a maximal-length shift register sequence \cite[Chapter 4]{golomb2005signal}, for which a $\unit[B_{\rm key}]{bit}$ seed grants $P = \left\lfloor\frac{2^{B_{\rm key}}-1}{m n}\right\rfloor$. Regrettably, such a sequence is easily cryptanalysed from only $2 B_{\rm key}$ of its bits by the well-known Berlekamp-Massey algorithm \cite{massey1969shift-register}. 

Hence, a successful KPA that retrieves even part of an encoding matrix, \eg, one of its rows, may expose just enough information to reveal the key and therefore break a CS-based encryption. To contrast this type of threat, our analysis shows how KPAs are incapable of revealing missing information on the true encoding matrices, whose symbols remain undetermined. 
}

\subsubsection{Two-Class Encryption}
\label{sec:twoclcl}
{in an extended version of this encryption framework, \ie, {\em{two-class encryption}} by CS  \cite{cambareri2013twoclass,cambareri2014low}, we consider a first sequence of matrices $\{A^{(0), [t]}\}_{t \in \mathbb{N}}, A^{(0), [t]} \in \{-1, 1\}^{m \times n}$ obtained by pseudo-random expansion of a seed ${\rm Key}\left(A^{(0)}\right)$. In parallel, a sequence of index pair sets $\{C^{(0), [t]}\}_{t \in \mathbb{N}}, C^{(0), [t]} \subset\{0,\dots,m-1\}\times\{0,\dots,n-1\}$ is obtained by pseudo-random expansion of a seed ${\rm Key}\left(C^{(0)}\right)$. We then generate a second sequence of matrices $\{A^{(1), [t]}\}_{t \in \mathbb{N}}$ whose elements $A^{(1), [t]}$ are obtained by combining $A^{(0),[t]}, C^{(0),[t]}$ as
\begin{equation}
\label{signflip}
A^{(1), [t]}_{j,l}=\begin{cases}
A^{(0),[t]}_{j,l} & \text{if $(j,l)\not\in C^{(0),[t]}$}\\
-A^{(0),[t]}_{j,l} & \text{if $(j,l)\in C^{(0),[t]}$}\\
\end{cases}
\end{equation}
\noindent with $C^{(0),[t]}$ indicating which entries of $A^{(0),[t]}$ must be sign-flipped to obtain $A^{(1),[t]}$, that is then used to encode $x$ into $y$. 
Thus, we consider a cardinality $c$ for every $C^{(0),[t]}$, define $\eta=\nicefrac{c}{m n}$ the sign flipping density}, and let $A^{(0)}, A^{(1)}, C^{(0)}$ be generic, unique random matrix instances (that is, the matrix sequences will be implicitly considered from now on). 
Given any plaintext $x$, the corresponding ciphertext $y$ is produced as $y = A^{(1)} x$, $A^{(1)}$ being the {\em true encoding matrix}. Two-class encryption is then achieved by distributing ${\rm Key}\left(A^{(0)}\right)$ to all authorised receivers and ${\rm Key}\left(C^{(0)}\right)$ only to first-class receivers. In fact, when $y$ is communicated, receivers knowing both ${\rm Key}\left(A^{(0)}\right)$ and ${\rm Key}\left(C^{(0)}\right)$ are able to rebuild the corresponding $A^{(1)}$ used in the encoding and reconstruct  $x$ with full quality by solving BPDN with $\omega = 0$.

On the other hand, second-class receivers may only rebuild $A^{(0)}$ from their available information. For $0<\eta\ll 1$ such a matrix is an approximation of the corresponding $A^{(1)}$, thus allowing signal recovery with lower quality than that achieved by first-class receivers. Furthermore, any receiver not knowing ${\rm Key}\left(A^{(0)}\right)$ has no information on the encoding matrix and is consequently unable to recover $x$, which remains encrypted.

In \cite{cambareri2014low} we have characterised the effectiveness of this scheme by showing how eavesdroppers trying to compensate their ignorance of the key by means of straightforward statistical analysis of $y$ are presented with approximately Gaussian-distributed ciphertexts (converging with rate $\mathcal{O}(n^{-1})$). In addition, if $A^{(0)}$ is an antipodal  random matrix, the same can be said of $A^{(1)}$ since the statistics of its equiprobable symbols are unaltered by $C^{(0)}$ used to build the latter from the former. Hence, the ciphertext is statistically indistinguishable from the one that could be produced by encoding the same plaintext with $A^{(0)}$ instead of $A^{(1)}$, and second-class users will also be unable to exploit the statistical properties of $y$.
{
\subsection{Signal Models and Assumptions}
\noindent Since the attacks we present rely on {deterministic} knowledge of $x$ and $y$, we assume throughout the paper that both plaintexts and ciphertexts are represented by digital words. For simplicity, we let $x = \{x_l\}^{n-1}_{l=0}$ be such that $x_l\in\{-L,\dots,-1, 0, 1,\dots,L\}$ for some integer $L > 0$. 
Note that the number of bits representing the plaintext in this fashion is at least $B_x = \lceil\log_2 (2 L + 1) \rceil$, so we may assume $B_x$ is less than a few tens in typical embodiments (actually, $B_x \leq \unit[32]{bit}$ {in typical signal processing applications}). Consequently, the ciphertext will be represented by $\{y_l\}^{m-1}_{l=0}$, where each $y_l$ is quantised with {$B_y = \unit[{B_x + \lceil \log_2 n \rceil}]{bit}$ that} avoid any information loss. 
\section{Known-Plaintext Attacks}
\label{sec:kpa}
\begin{figure}[t]
\centering
\includegraphics[width=3.3in]{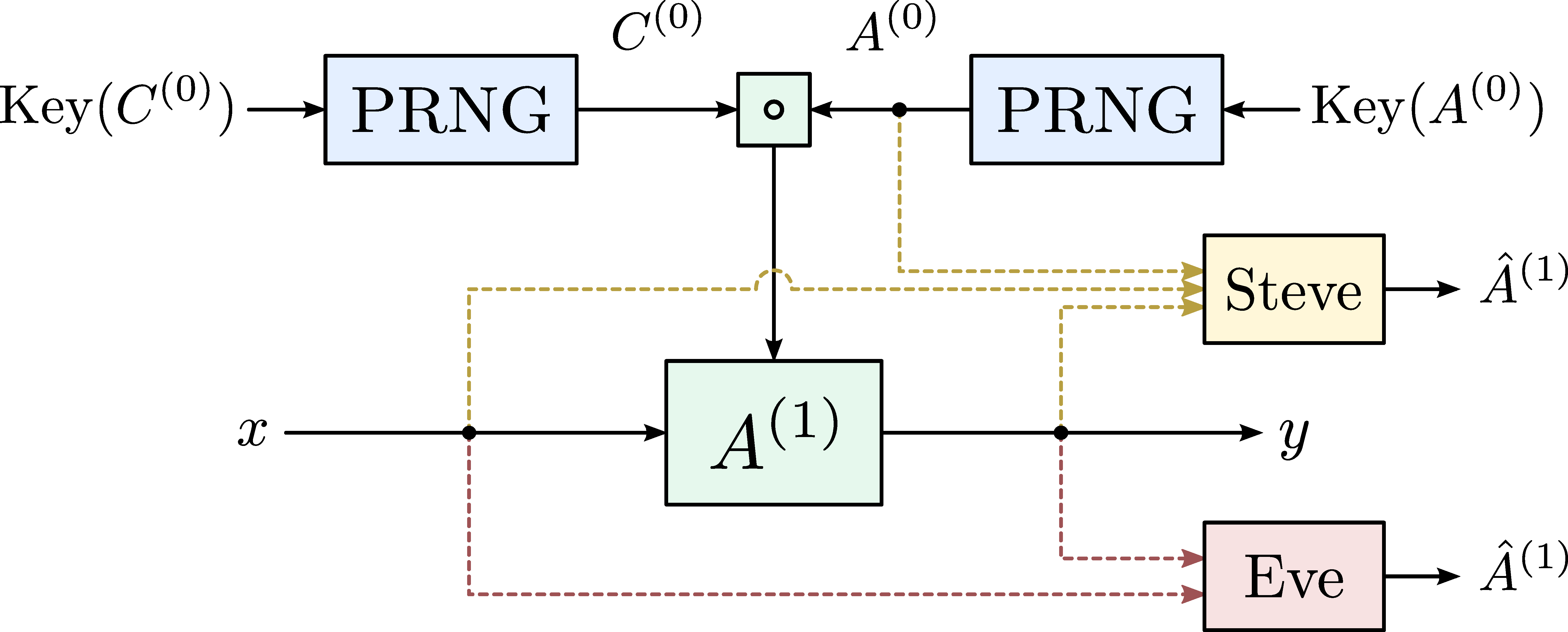}
\caption{\label{schemascemo}A two-class encryption scheme and the known-plaintext attacks being analysed from an eavesdropper (Eve) and a second-class user (Steve).}
\end{figure}
\noindent In view of quantifying the resistance of this scheme to threatening cryptanalyses, we now consider situations in which an attacker gains access to a given, exact value of the plaintext $x$ corresponding to a ciphertext $y$.  Based on this knowledge, the attacker aims at computing the true encoding $A^{(1)}$ such that $y = A^{(1)} x$. In the following we will consider a KPA by assuming that only one $(x, y)$ pair is known for a certain $A^{(1)}$, consistently with the hypothesis that $A^{(1)}$ is never reused in the encoding (as detailed in Section \ref{sec:seccs}). 
This type of attack gives rise to different strategies (see Fig. \ref{schemascemo}) whether the attacker knows nothing except the $(x,y)$ pair (a pure eavesdropper, Eve) or it is a second-class receiver knowing also the partially correct encoding $A^{(0)}$ and attempting to complete its knowledge of $A^{(1)}$ (we will call this malicious second-class user Steve and its KPA a \emph{class-upgrade}). 

{For the sake of simplicity, both KPAs are here characterised on a single row\footnote{We denote with $A_j$ the $j$-th row of a matrix $A$.} of $A^{(1)}$, while a complete KPA will entail $m$ of such attacks. Furthermore, we note that the analysis is carried out in full compliance with Kerckhoffs's principle \cite{kerckhoffs1883la}, \ie, the only information that the attackers are missing is their respective part of the encryption key, while any other detail on the sparsity basis, as well as two-class encryption specifications is here regarded as known.}

\subsection{Eavesdropper's Known-Plaintext Attack}
\label{evekpa}
\noindent Given a plaintext $x$ and the corresponding ciphertext $y = A^{(1)} x$ we now assume the perspective of Eve and attempt to recover $A^{(1)}_j$ with a set of antipodal symbols $\hat{A}^{(1)}_j = \{\hat{A}^{(1)}_{j,l}\}_{l=0}^{n-1}$ such that
\begin{equation}
\label{eq:pm}
y_j = \sum_{l=0}^{n-1} \hat{A}^{(1)}_{j,l} x_l
\end{equation}
Moreover, to favour the attacker\footnote{If any $x_l = 0$ each corresponding summand would give no contribution to the sum \eqref{eq:pm}, thus making $\hat{A}^{(1)}_{j,l}$ an undetermined variable in the attack.} we assume all $x_l \neq 0$. We now introduce a combinatorial optimisation problem at the core of the analysed KPAs.

\begin{defn}[Subset-Sum Problem]
Let $\{u_l\}^{n-1}_{l=0}, u_l \in \{1, \ldots, L\} \subset \mathbb{N}_+$ and $\upsilon \in \mathbb{N}_+$. We define \emph{subset-sum problem} (SSP) \cite[Chap. 4]{martello1990knapsack} the problem of assigning $n$ binary variables $b_l\in\{0,1\}$, $l=0,\dots,n-1$ such that
\begin{equation}
\label{ssCS}
\upsilon = \sum_{l=0}^{n-1} b_l u_l
\end{equation}
We define \emph{solution} any $\{b_l\}^{n-1}_{l=0}$ verifying \eqref{ssCS}.
With the above definitions, the  {\em density} of this problem is defined as \cite{lagarias1985solving}
\begin{equation}
\label{density}
\delta(n, L) = \frac{n}{\log_2 L}
\end{equation}

\end{defn}
Although in general a SSP is NP-complete, not all of its instances are equally {\em hard}. In fact, it is known that {\em high-density} instances (\ie, $\delta(n,L) > 1$) have {plenty} of solutions found or approximated by, \eg, dynamic programming, whereas {\em low-density} instances are {typically hard}, although for special cases polynomial-time algorithms have been found \cite{lagarias1985solving}. Moreover, such low-density hard SSP instances have been used in cryptography to develop the family of {public-key knapsack cryptosystems} \cite{merkle1978hiding, chor1988knapsack} although most have been broken with polynomial-time algorithms \cite{odlyzko1990rise}.
\begin{props}[Eve's KPA]
\label{prop:SS1}
The KPA to $A_j^{(1)}$ given $(x, y)$ is equivalent to a SSP where each $u_l=|x_l|$, the variables $b_l = \frac{1}{2}(\sign{x_l} \hat{A}^{(1)}_{j,l}  + 1)$ and the sum $\upsilon = \frac{1}{2}\left(y_j + \sum^{n-1}_{l=0} |x_l|\right)$. This SSP has a {\em true solution} $\{\bar{b}_l\}_{l=0}^{n-1}$ that is mapped to the row $A^{(1)}_j$, and other {\em candidate solutions} that verify \eqref{ssCS} but correspond to matrix rows $\hat{A}^{(1)}_j \neq A^{(1)}_j$.
\end{props}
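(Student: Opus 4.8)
The plan is to prove the equivalence by an explicit, invertible change of variables that turns the antipodal linear constraint \eqref{eq:pm} into the integer constraint \eqref{ssCS}. First I would use the hypothesis $x_l \neq 0$ to factor each $x_l = \sign{x_l}\,|x_l|$ with $|x_l|\in\{1,\dots,L\}$, so that \eqref{eq:pm} reads $y_j = \sum_{l=0}^{n-1}\big(\sign{x_l}\,\hat{A}^{(1)}_{j,l}\big)|x_l|$. Setting $\sigma_l = \sign{x_l}\,\hat{A}^{(1)}_{j,l}\in\{-1,1\}$ and $b_l = \tfrac12(\sigma_l+1)\in\{0,1\}$ gives the inverse relations $\sigma_l = 2b_l-1$ and, using $\sign{x_l}^2 = 1$, $\hat{A}^{(1)}_{j,l} = \sign{x_l}(2b_l-1)$. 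Substituting and collecting terms yields $y_j = 2\sum_l b_l|x_l| - \sum_l|x_l|$, i.e. $\sum_l b_l|x_l| = \tfrac12\big(y_j+\sum_l|x_l|\big)$, which is exactly \eqref{ssCS} with $u_l=|x_l|$ and $\upsilon = \tfrac12\big(y_j+\sum_l|x_l|\big)$ as claimed.

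Next I would verify that this $\upsilon$ is a legitimate SSP datum, namely a positive integer. Integrality follows from a parity argument: since $\hat{A}^{(1)}_{j,l}x_l \equiv x_l \equiv |x_l| \pmod 2$ for every $l$, we get $y_j \equiv \sum_l|x_l| \pmod 2$, hence $y_j+\sum_l|x_l|$ is even. Positivity (equivalently, ruling out the degenerate case $\upsilon = 0$, which would force the all-zero solution) holds as soon as $y_j > -\sum_l|x_l|$; I would simply record this as a mild nondegeneracy assumption on the pair $(x,y)$, noting that the single pathological sign pattern can be excluded without affecting the solution count. Thus \eqref{eq:pm} and \eqref{ssCS} are in one-to-one correspondence via the bijection $b \leftrightarrow \hat{A}^{(1)}_j$ between $\{0,1\}^n$ and $\{-1,1\}^n$ exhibited above.

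Finally, the separation into a true solution and candidate solutions is immediate once the bijection is fixed. The true encoding row $A^{(1)}_j$ satisfies $y_j = \sum_l A^{(1)}_{j,l}x_l$ by definition of the ciphertext $y = A^{(1)}x$; pushing it through the forward map gives $\bar{b}_l = \tfrac12(\sign{x_l}A^{(1)}_{j,l}+1)$, which is therefore a solution of \eqref{ssCS}. Conversely, any solution $\{b_l\}$ of \eqref{ssCS} pulls back through $\hat{A}^{(1)}_{j,l} = \sign{x_l}(2b_l-1)$ to an antipodal row obeying \eqref{eq:pm}, and since the correspondence is a bijection, the solutions distinct from $\bar{b}$ are exactly those mapping to rows $\hat{A}^{(1)}_j \neq A^{(1)}_j$. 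I do not expect a genuine obstacle in this argument — it is essentially bookkeeping — the only points deserving a word of care are the integrality and positivity of $\upsilon$ and the explicit check that the change of variables is invertible, so that enumerating SSP solutions is the same as enumerating candidate matrix rows.
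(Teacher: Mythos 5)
Your proof is correct and follows essentially the same route as the paper's: the same change of variables $b_l=\tfrac12\bigl(\sign{x_l}\hat{A}^{(1)}_{j,l}+1\bigr)$, $u_l=|x_l|$, yielding the same $\upsilon=\tfrac12\bigl(y_j+\sum_{l}|x_l|\bigr)$ and the same identification of the true row with $\{\bar{b}_l\}_{l=0}^{n-1}$. Your additional checks (parity of $y_j+\sum_l|x_l|$, positivity of $\upsilon$, explicit invertibility of the map) are sound bookkeeping that the paper leaves implicit, but they do not change the argument.
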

This mapping is explained in Appendix \ref{appSS}, and we define $(x,y, A^{(1)}_j)$ a {\em problem instance}.  In our case we see that the density \eqref{density} is high since $n$ is large and $\log_2 L$ is fixed by the digital representation of $x$ (\eg, so that $B_x \leq 64$). We are therefore operating in a region {in which a solution of the SSP \eqref{ssCS} is typically found in polynomial time}. In fact, the resistance of the analysed embodiment of CS against KPAs is not due to the hardness of the corresponding SSP but, as we show below, to the huge number of candidate solutions as $n$ increases, among which an attacker should find the only true solution to guess a single row of $A^{(1)}$. Since no {\em a priori} criterion exists to select them, we consider them {\em indistinguishable}.
The next Theorem\footnote{{$\asimeq$ denotes asymptotic equality as $n \rightarrow \infty$.}}
 calculates the expected number of candidate solutions to Eve's KPA by applying the theory developed in \cite{sasamoto2001statistical}.
\begin{figure}[t]
	\centering
	\includegraphics[width=3.4in]{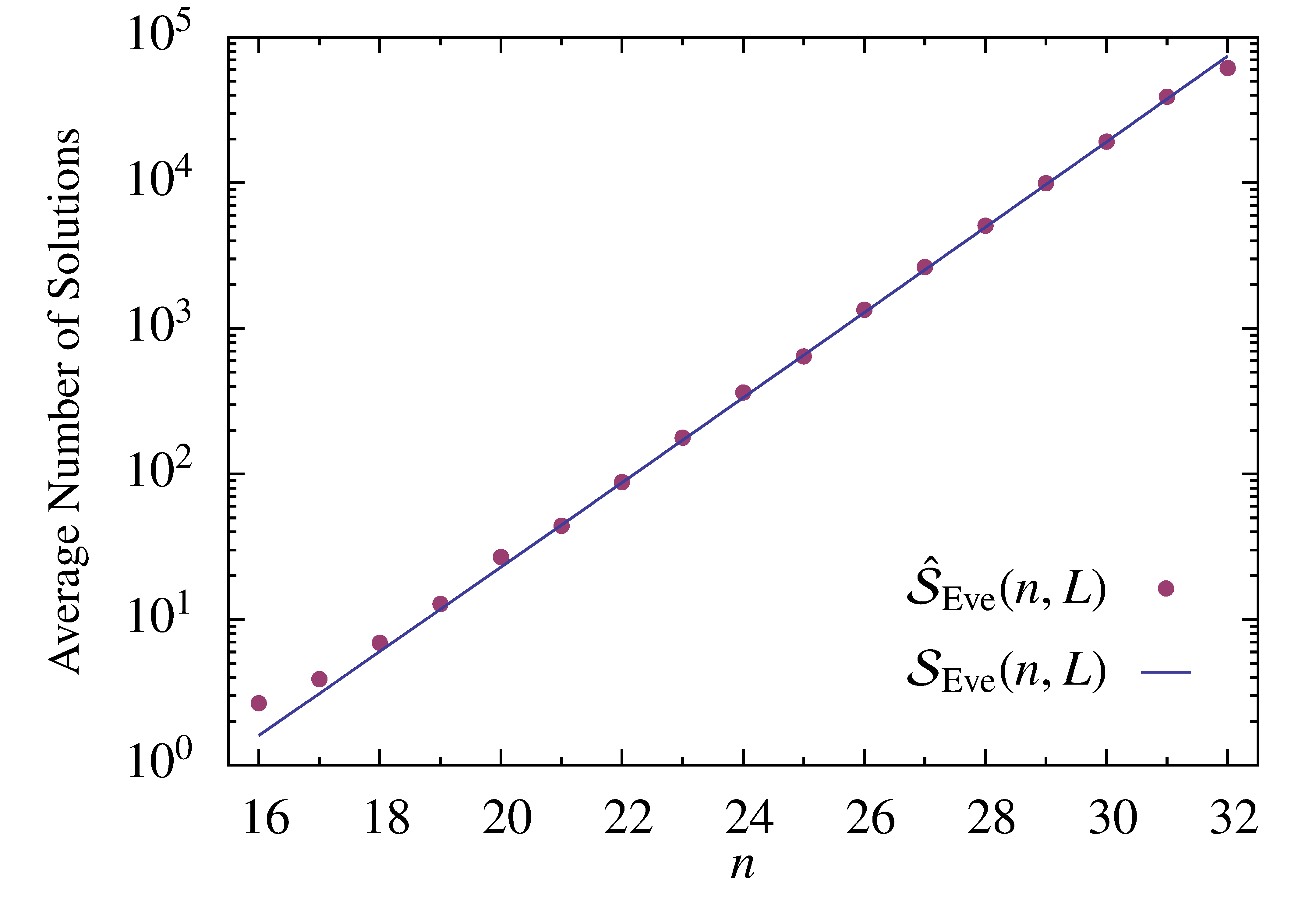}
	\caption{Sample average of the number of solutions for Eve's KPA compared to the theoretical value of \eqref{nosolSS} for $L = 10^4$.}
	\label{fig:enumsolKPeve}
\end{figure}

\begin{theorem}[Expected number of solutions for Eve's KPA]
\label{ExpSolSS}
For large $n$, the expected number of candidate solutions of the KPA in Proposition \ref{prop:SS1}, {in which ($i$) all the coefficients $\{u_l\}^{n-1}_{l=0}$ are i.i.d. uniformly drawn from $\{1, \ldots, L\}$, and ($ii$) the true solution $\{\bar{b}_l\}_{l=0}^{n-1}$ is drawn with equiprobable and independent binary values}, is
\begin{equation}
\SEve(n, L) \asimeq \frac{2^n}{L} \sqrt{\frac{3}{\pi n}}
\label{nosolSS}
\end{equation}
\end{theorem}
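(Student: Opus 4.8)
The plan is to reduce $\SEve(n,L)$ to a single real integral by the standard Fourier encoding of the integrality constraint, and then to extract its $n\to\infty$ behaviour by Laplace's (saddle-point) method, in the spirit of \cite{sasamoto2001statistical}; the one delicate point will be showing that the integral concentrates on a shrinking neighbourhood of the endpoints of $[0,1]$.

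\emph{Step 1 (an exact integral formula).} Let $N$ be the number of $\{b_l\}_{l=0}^{n-1}\in\{0,1\}^{n}$ solving \eqref{ssCS} for a random instance, so that $\SEve(n,L)$ equals $\Exp[N]$ or $\Exp[N]-1$ depending on whether the true solution is counted among the candidates; since the leading term diverges this distinction is immaterial and it suffices to compute $\Exp[N]$. Using $\upsilon=\sum_{l}\bar b_l u_l$ and the identity $\mathbf 1[k=0]=\int_0^1 e^{2\pi\iu k\theta}\,\dd\theta$ for $k\in\mathbb Z$, one has $\Exp[N]=\sum_{b\in\{0,1\}^{n}}\Pro\bigl[\textstyle\sum_l(b_l-\bar b_l)u_l=0\bigr]$, and by Fubini together with independence across $l$ of the $u_l$ (i.i.d.\ uniform on $\{1,\dots,L\}$) and of the $\bar b_l$ (i.i.d.\ equiprobable on $\{0,1\}$),
\begin{equation}
\Exp[N]=\int_0^1\bigl(1+\Re\,\varphi_L(\theta)\bigr)^{n}\,\dd\theta,\qquad \varphi_L(\theta):=\frac1L\sum_{u=1}^{L}e^{2\pi\iu u\theta},
\end{equation}
the factorisation using, for each index $l$, that the $\tfrac12$-weighted sum over $(b_l,\bar b_l)\in\{0,1\}^2$ of $\Exp_{u_l}\bigl[e^{2\pi\iu(b_l-\bar b_l)u_l\theta}\bigr]$ — whose four terms are $1,\overline{\varphi_L(\theta)},\varphi_L(\theta),1$ — equals $1+\Re\,\varphi_L(\theta)$.

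\emph{Step 2 (Laplace asymptotics).} Write $g(\theta):=1+\Re\,\varphi_L(\theta)$, a real $1$-periodic function with $0\le g\le 2$ that attains the value $2$ on $[0,1]$ only at $\theta\in\{0,1\}$ (since $\Re\,\varphi_L(\theta)=1$ forces $\cos(2\pi u\theta)=1$ for every $u=1,\dots,L$). For fixed $\delta\in(0,\tfrac12)$ the contribution of $[\delta,1-\delta]$ is at most $\bigl(\max_{[\delta,1-\delta]}g\bigr)^{n}=(2-c)^{n}$ for some $c>0$, which is exponentially smaller than the right-hand side of \eqref{nosolSS}. Near $\theta=0$ one computes $g'(0)=\Re\,\varphi_L'(0)=0$ and $g''(0)=\Re\,\varphi_L''(0)=-\tfrac{4\pi^2}{L}\sum_{u=1}^{L}u^2$; setting $\sigma_L^2:=\tfrac1L\sum_{u=1}^{L}u^2=\tfrac{(L+1)(2L+1)}{6}$ this equals $-4\pi^2\sigma_L^2$, so $g(\theta)=2\bigl(1-\pi^2\sigma_L^2\theta^2+O(\theta^4)\bigr)$ and $g(\theta)^{n}=2^{n}e^{-n\pi^2\sigma_L^2\theta^2}\bigl(1+o(1)\bigr)$ on the peak of width $\Theta\bigl((\sigma_L\sqrt n)^{-1}\bigr)$. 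Integrating this Gaussian over the two half-peaks at $\theta=0$ and $\theta=1$, which merge into one by periodicity of $g$, gives $\int_0^1 g(\theta)^{n}\,\dd\theta\asimeq 2^{n}\int_{\mathbb R}e^{-n\pi^2\sigma_L^2\theta^2}\,\dd\theta=\frac{2^{n}}{\sqrt{\pi n}\,\sigma_L}$.

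\emph{Conclusion and the main obstacle.} Combining the two estimates, $\SEve(n,L)\asimeq 2^{n}/(\sqrt{\pi n}\,\sigma_L)$; inserting the large-$L$ value $\sigma_L=\sqrt{\Exp[u_l^2]}\sim L/\sqrt 3$ yields precisely \eqref{nosolSS}. The step I expect to demand the most care is the localisation of Step 2: making Laplace's estimate at a \emph{boundary} maximum fully rigorous — matching the endpoint neighbourhood to the Gaussian tail and controlling the $O(\theta^4)$ remainder against $n$ — and, if one wants the estimate uniform when $L$ is allowed to grow with $n$ rather than stay fixed, replacing the crude $(2-c)^{n}$ bound on $[\delta,1-\delta]$ by a multi-scale argument based on the Dirichlet-kernel bound $|\varphi_L(\theta)|=\tfrac1L\,\bigl|\sin(\pi L\theta)\bigr|/\bigl|\sin(\pi\theta)\bigr|\le\bigl(L|\sin\pi\theta|\bigr)^{-1}$, which keeps $g$ exponentially below $2$ away from $\{0,1\}$.
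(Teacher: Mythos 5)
Your proof is correct, and it takes a genuinely different route from the paper. The paper does not compute the first moment from scratch: it imports the statistical-mechanics formula of Sasamoto \emph{et al.} for the number of solutions $\SEve(\tau,n,L)$ of a SSP at fixed normalised target $\tau=\upsilon/(nL)$, argues that the random target induced by the true solution makes $\tau$ distributed proportionally to $\SEve(\tau,n,L)$ itself, and then evaluates the resulting size-biased average $\int\SEve^2/\int\SEve$ via a Gaussian-profile approximation peaked at $\tau=\nicefrac14$ (whence the $\nicefrac{1}{\sqrt2}$ factor and $a(\nicefrac14)=0$). Your Fourier/first-moment computation reaches the same quantity directly: averaging over the equiprobable $\bar b_l$ is exactly the size-biasing over $\upsilon$ that the paper introduces heuristically (``the probability of a value of $\upsilon$ is proportional to its number of solutions''), so your Step 1 in fact proves that step rather than assuming it, and your Laplace analysis of $\int_0^1(1+\Re\varphi_L)^n\dd\theta$ replaces both the imported saddle-point formula and the Gaussian-profile argument. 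What your approach buys is self-containedness, an elementary and checkable derivation, and a slightly sharper finite-$L$ constant ($1/\sigma_L$ with $\sigma_L^2=\frac{(L+1)(2L+1)}{6}$, versus $\sqrt3/L$, consistent since the paper's continuous-coefficient framework carries the same $O(1/L)$ relative error); what the paper's route buys is brevity and a framework that transfers verbatim to the cardinality-constrained SSP of Steve's KPA (Theorem~\ref{ExpSolSSCon}), though your method would also extend there by inserting a second dual variable for the constraint \eqref{eq:cconstr}. The caveats you flag (rigorous boundary-maximum Laplace estimate, uniformity if $L$ grows with $n$) are real but harmless here, since the theorem is stated for $n\rightarrow\infty$ at fixed $L$ and the paper itself is no more rigorous on these points.
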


The proof of Theorem \ref{ExpSolSS} is given in Appendix \ref{appSS}. This result (as well as the whole statistical mechanics framework from which it is derived) gives no hint on how much \eqref{nosolSS} is representative of finite-$n$ behaviours. To compensate for that, {we here enumerate by means of the binary programming solver in CPLEX \cite{cplex} all the solutions to several small-$n$ problem instances of Proposition \ref{prop:SS1} and verify that, even non-asymptotically, the expression \eqref{nosolSS} can be used to effectively estimate the expected number of candidate solutions to Eve's KPA}. Such numerical evidence is reported in Fig. \ref{fig:enumsolKPeve}, where the sample average of the number of solutions $\hat{\mathcal{S}}_\text{\rm Eve}(n, L)$ to $50$ {randomly generated} problem instances with $L=10^4$ and $n = 16, \ldots, 32$ is plotted and compared with \eqref{nosolSS}. 

The remarkable matching observed therein allows us to estimate, for example, that a KPA to the encoding of a grayscale image of $n = \unit[64 \times 64]{pixel}$ quantised with $B_x = \unit[8]{bit}$ (unsigned, \ie, $L = 128, n = 4096$) would have to discriminate on the average between $1.25 \cdot 10^{1229}$ equally good candidate solutions for each of the rows of the encoding matrix. This number is not far from the total possible rows, $2^{4096}= 1.04 \cdot 10^{1233}$. Hence, any attacker using this strategy is faced with a deluge of candidate solutions, from which it would choose one presumed to be {exact} to attempt a guess on a single row of $A^{(1)}$.

A legitimate concern when the attacker is presented with such a set of solutions is that most of them could be good approximations of the true encoding matrix row $A^{(1)}_j$. To see whether this is the case, we quantify the difference between $A^{(1)}_j$ and the corresponding candidates $\hat{A}^{(1)}_j$ resulting from a KPA in terms of their Hamming distance, \ie, as the number of entries in which they differ. 

\begin{theorem}[Expected number of solutions for Eve's KPA at Hamming distance $h$ from the true one]
\label{ExpSolSSHam}
The expected number of candidate solutions {at Hamming distance $h$ from the true solution} of the KPA in Proposition \ref{prop:SS1}, {in which ($i$) all the coefficients $\{u_l\}^{n-1}_{l=0}$ are i.i.d. uniformly drawn from $\{1, \ldots, L\}$, ($ii$) the true solution $\{\bar{b}_l\}_{l=0}^{n-1}$ {is drawn with} equiprobable and independent binary values,} is
\begin{equation} 
\label{eq:hX1}
\SEve^{(h)}(n,L)=\binom{n}{h}\frac{P_h(L)}{2^h L^h}
\end{equation}
where $P_h(L)$ is a polynomial in $L$ whose coefficients are reported in Table \ref{tab:PhL} for $h=2,\dots,15$.
\end{theorem}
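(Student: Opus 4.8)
The plan is to compute the expectation exactly by linearity of expectation, reducing it to the number of integer solutions of a single balanced linear equation whose polynomiality in $L$ is then established. I would start by describing a candidate solution at Hamming distance $h$: if $\{\hat{b}_l\}_{l=0}^{n-1}$ agrees with the true solution $\{\bar{b}_l\}_{l=0}^{n-1}$ everywhere except on a set $S\subset\{0,\dots,n-1\}$ of size $h$, then on $S$ it is forced to take the complementary value $1-\bar{b}_l$. Writing $\sigma_l=1-2\bar{b}_l\in\{-1,1\}$ and $u_l=|x_l|$ and subtracting the two instances of \eqref{ssCS}, such a $\hat{b}$ solves \eqref{ssCS} if and only if $\sum_{l\in S}\sigma_l u_l=0$. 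Hence the number of candidate solutions at Hamming distance $h$ is a sum of $\binom{n}{h}$ indicators, $X_h=\sum_{S:\,|S|=h}\mathbf{1}\!\left[\sum_{l\in S}\sigma_l u_l=0\right]$.

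The second step is to take expectations. Under hypotheses $(i)$ and $(ii)$ the coefficients $u_l$ and the true bits $\bar{b}_l$ are independent, so for every fixed $S$ of size $h$ the pairs $(\sigma_l,u_l)_{l\in S}$ are i.i.d., with $\sigma_l$ uniform on $\{-1,1\}$ and $u_l$ uniform on $\{1,\dots,L\}$. Consequently $\Pro\!\left[\sum_{l\in S}\sigma_l u_l=0\right]$ does not depend on $S$ and equals $P_h(L)/(2^h L^h)$, where $P_h(L):=\#\{(\sigma,u)\in\{-1,1\}^h\times\{1,\dots,L\}^h:\sum_{l=1}^h\sigma_l u_l=0\}$ counts the favourable cases among $2^h L^h$ equally likely ones. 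Linearity of expectation then gives $\SEve^{(h)}(n,L)=\Exp[X_h]=\binom{n}{h}\frac{P_h(L)}{2^h L^h}$, which is \eqref{eq:hX1}; unlike \eqref{nosolSS}, this identity is exact for every finite $n$.

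It remains to prove $P_h(L)$ is a polynomial in $L$ and to identify its coefficients. The substitution $a_l=\sigma_l u_l$ is a bijection onto $\{-L,\dots,-1,1,\dots,L\}$, so $P_h(L)=\#\{a\in(\{-L,\dots,L\}\setminus\{0\})^h:\sum_l a_l=0\}$; inclusion--exclusion over the coordinates forced to $0$ yields $P_h(L)=\sum_{j=0}^{h}(-1)^j\binom{h}{j}N_{h-j}(L)$ with $N_m(L):=\#\{a\in\{-L,\dots,L\}^m:\sum_l a_l=0\}$. Each $N_m(L)$ is exactly the number of lattice points in $L\mathcal{P}_m$, where $\mathcal{P}_m:=\{x\in[-1,1]^m:\sum_l x_l=0\}$; since the slicing hyperplane has all-ones normal, at any vertex of $\mathcal{P}_m$ the unique coordinate not pinned to $\pm1$ is an integer, so $\mathcal{P}_m$ is a lattice polytope and, by Ehrhart's theorem, $N_m(L)$ is a polynomial in $L$. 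The highest-degree contribution to $P_h$ comes from $N_h$, giving $\deg P_h=h-1$; the coefficients of $P_h$ follow from those of the $N_m$'s --- equivalently, polynomiality being known, by interpolating the elementary counts $P_h(0),\dots,P_h(h-1)$ --- and are those listed in Table \ref{tab:PhL}. In particular $P_1\equiv0$, consistent with the fact that flipping a single symbol cannot preserve the sum.

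The routine parts are the reformulation and the expectation; the delicate point is that $N_m(L)$, and hence $P_h(L)$, is a genuine polynomial rather than merely a quasi-polynomial, which is precisely what the integrality of $\mathcal{P}_m$ secures and would fail for a slicing hyperplane whose free coefficient were not a unit. The remaining task --- the explicit evaluation of the coefficients for $h=2,\dots,15$ --- is finite but laborious. As a consistency check, summing \eqref{eq:hX1} over $h$ and applying a local central limit estimate to $\sum_l\sigma_l u_l$ over the dominant range $h\approx n/2$ recovers $\SEve(n,L)\asimeq\frac{2^n}{L}\sqrt{\frac{3}{\pi n}}$.
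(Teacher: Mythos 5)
Your proof is correct, and its first half --- describing a distance-$h$ candidate as the forced complement of the true solution on an index set $S$ of size $h$, observing that it solves \eqref{ssCS} iff $\sum_{l\in S}\sigma_l u_l=0$ (the paper's balanced equation \eqref{eq:part}), and then taking expectations over the $\binom{n}{h}$ index sets with per-set probability $P_h(L)/(2^h L^h)$ --- is essentially the paper's own argument, including the fact that \eqref{eq:hX1} is exact at finite $n$. Where you genuinely diverge is in the key lemma that $P_h(L)$ is a polynomial of degree $h-1$. The paper fixes the sign pattern $\bar{b}$ on the disagreement set, intersects the hyperplane $\pi_{\bar b}$ with the box $\{1,\dots,L\}^h$, and because the coordinates are constrained to be strictly positive it must count \emph{interior} lattice points of $\{0,\dots,L+1\}^h\cap\pi_{\bar b}$, which requires Ehrhart--Macdonald reciprocity, followed by an inclusion--exclusion over the sign-pattern polytopes. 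You instead absorb the signs into the variables via $a_l=\sigma_l u_l$, so that $P_h(L)$ counts zero-sum points of $\left(\{-L,\dots,L\}\setminus\{0\}\right)^h$, eliminate the zero coordinates by a one-line binomial inclusion--exclusion, and apply plain Ehrhart to the single family of lattice polytopes $[-1,1]^m\cap\{\sum_l x_l=0\}$ (whose integrality of vertices you correctly verify); no reciprocity is needed. This route is more elementary and yields the explicit recipe $P_h(L)=\sum_{j=0}^{h}(-1)^j\binom{h}{j}N_{h-j}(L)$, which indeed reproduces Table \ref{tab:PhL} on small cases (e.g. $N_2=2L+1$, $N_3=3L^2+3L+1$ give $P_2=2L$ and $P_3=3L^2-3L$, matching the paper's hand computations), whereas the paper extracts the coefficients by symbolic computation with integer-partition functions; both arguments identify the degree as $h-1$ and both defer the full coefficient table to machine computation. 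Your closing remark recovering \eqref{nosolSS} by a local CLT is heuristic only, but you flag it as a consistency check and it is not needed for the theorem.
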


The proof of this Theorem and the derivation of Table \ref{tab:PhL} are reported in Appendix \ref{HDsols}. As before, we collect some empirical evidence that the expression \eqref{eq:hX1} correctly anticipates the expected number of solutions at a given Hamming distance from the true one, noting that Theorem \ref{ExpSolSSHam} holds for finite $n$. 
Figure \ref{fig:enumsolKPeveHam} reports for $n=\{21,23,\ldots,31\}$ the sample average, over {the same $50$ problem instances generated in the experimental evaluation of \eqref{nosolSS}}, of the number of solutions to Eve's KPA whose Hamming distance from the true one is a given value $h = \{2, \ldots, 15\}$.  {This sample average is} compared against the value predicted by \eqref{eq:hX1} with the polynomial coefficients in Table \ref{tab:PhL}. The remarkable matching we observe allows us to estimate that, resuming the case of a grayscale image with $n=4096, L=128$, only $1.95 \cdot 10^{41}$ candidate solutions out of the average $1.25 \cdot 10^{1229}$ are expected to have a Hamming distance $h \leq 16$, while $6.33 \cdot 10^{76}$ attain a Hamming distance $h \leq 32$. Since these results apply to each row of the matrix being inferred, this indicates how the chance that a randomly chosen candidate solution is close to the true one is negligible.

{
Under {\em repeated} threat of Eve's KPA, a system-level perspective would impose a change of encryption key (\ie, of encoding matrix sequence) whenever the probability of failure of repeated KPAs, $p_{\rm fail} $, drops below a desired security level $\zeta \in (0,1)$, \ie, at any time $p_{\rm fail} \geq \zeta$. 
Some insight on the {\em encryption key lifetime} $T$ that guarantees this is then obtained by modelling the repeated KPAs as i.i.d. Bernoulli trials, each leading to a successful choice of the true solution with a probability that can be estimated with $\SEve(n,L)^{-1}$ in case of Eve's KPA. With this $p_{\rm fail} = \Pro\left[T \ \text{KPA fail}\right] = (1-\SEve(n, L)^{-1})^T$, so we may choose the key lifetime as $T \leq {\left[\log\left(1-\SEve(n, L)^{-1}\right)\right]^{-1}} \, {\log \zeta}$ to ensure the security level set by $\zeta$. Thus, we measure the key lifetime $T$ in {\em attack opportunities} for Eve; however, since $\SEve(n,L)$ is typically huge, the resulting $T$ is also very large. As an example, by plugging $n = 4096, L = 128$ in \eqref{nosolSS} and assuming $\zeta = 0.9999$, we obtain a key lifetime equivalent to at most $T = 1.25 \cdot 10^{1225}$ attack opportunities.
}

\begin{figure}[t]
	\centering
	\vspace{.3pt}
		\includegraphics[width=3.4in]{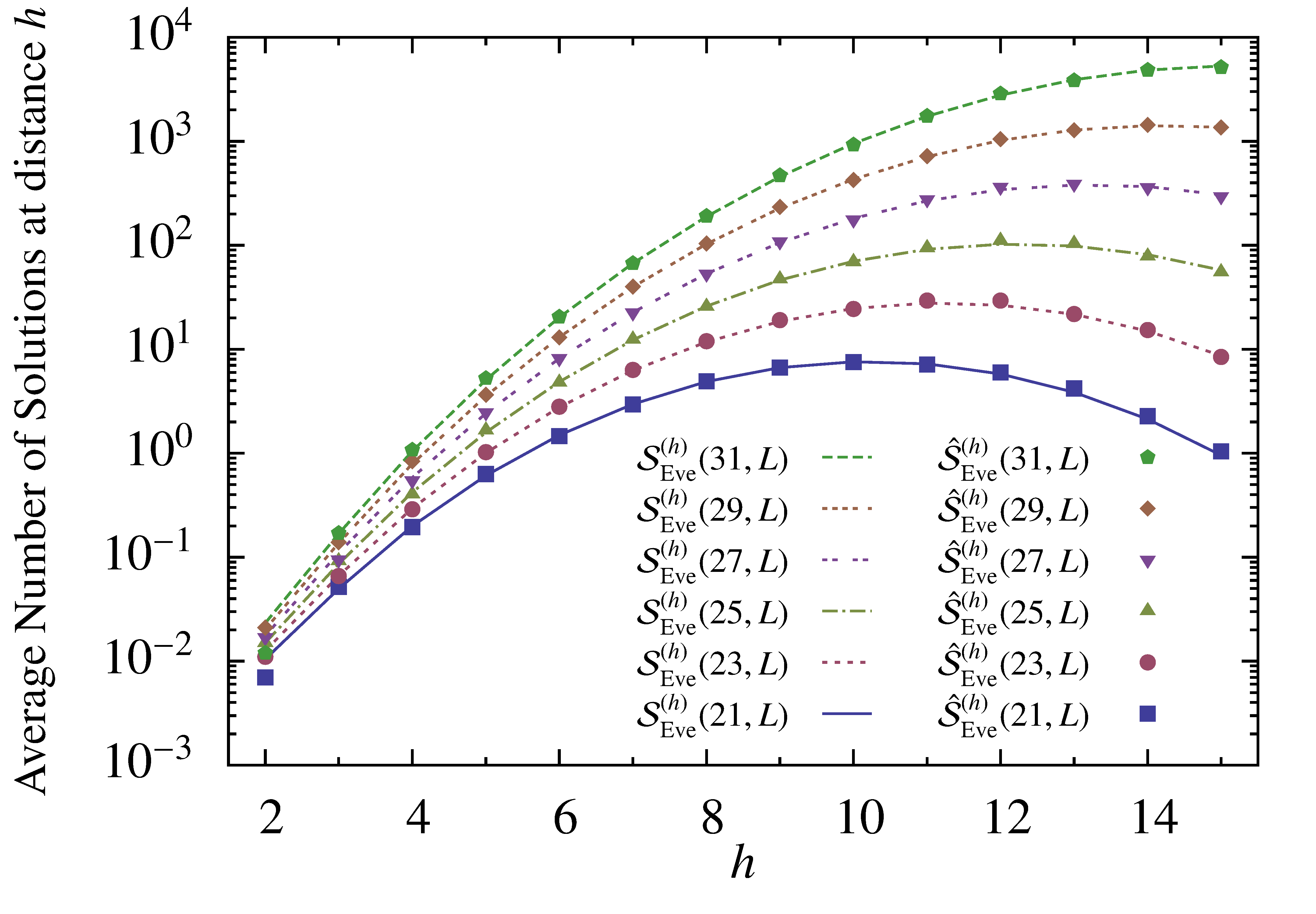}
	\caption{Sample average of the number of solutions for Eve's KPA at Hamming distance $h$ from the true one, compared to the theoretical value of \eqref{eq:hX1} for $L = 10^4$ and $n=21,23,\ldots,31$.}
	\label{fig:enumsolKPeveHam}
\end{figure}

\begin{table*}
\centering
\resizebox{\columnwidth}{!}{ 
$
\renewcommand{\arraystretch}{1.3}
\begin{array}{r||rrrrrrrrrrrrrr}
h & p_{1}^h & p_{2}^h & p_{3}^h & p_{4}^h & p_{5}^h & p_{6}^h & p_{7}^h & p_{8}^h & p_{9}^h & p_{10}^h & p_{11}^h & p_{12}^h & p_{13}^h & p_{14}^h \\
\hline
2 & 2 &&&&&&&&&&&&\\
\\
3 & -3 & 3 &&&&&&&&&&&\\
\\
4 & \dfrac{14}{3} & -4 & \dfrac{16}{3} &&&&&&&&&&\\
\\
5 & -\dfrac{15}{2} & \dfrac{65}{12} & -\dfrac{15}{2} & \dfrac{115}{12} &&&&&&&&&\\
\\
6 & \dfrac{62}{5} & -\dfrac{15}{2} & 11 & -\dfrac{27}{2} & \dfrac{88}{5} &&&&&&&&\\
\\
7 & -21 & \dfrac{959}{90} & -\dfrac{203}{12} & \dfrac{707}{36} & -\dfrac{301}{12} & \dfrac{5887}{180} &&&&&&&\\
\\
8 & \dfrac{254}{7} & -\dfrac{140}{9} & \dfrac{1226}{45} & -\dfrac{266}{9} & \dfrac{334}{9} & -\dfrac{422}{9} & \dfrac{19328}{315} &&&&&&\\
\\
9 & -\dfrac{255}{4} & \dfrac{2613}{112} & -\dfrac{731}{16} & \dfrac{14701}{320} & -\dfrac{457}{8} & \dfrac{2233}{32} & -\dfrac{1415}{16} & \dfrac{259723}{2240} &&&&&\\
\\
10 & \dfrac{1022}{9} & -\dfrac{2585}{72} & \dfrac{359105}{4536} & -\dfrac{7055}{96} & \dfrac{9869}{108} & -\dfrac{1725}{16} & \dfrac{28625}{216} & -\dfrac{48325}{288} & \dfrac{124952}{567} &&&&\\
\\
11 & -\dfrac{1023}{5} & \dfrac{16973}{300} & -\dfrac{60775}{432} & \dfrac{5463953}{45360} & -\dfrac{435941}{2880} & \dfrac{7449761}{43200} & -\dfrac{19811}{96} & \dfrac{1091629}{4320} & -\dfrac{2764663}{8640} & \dfrac{381773117}{907200} &&&\\
\\
12 & \dfrac{4094}{11} & -\dfrac{2277}{25} & \dfrac{687791}{2700} & -\dfrac{72523}{360} & \dfrac{3907067}{15120} & -\dfrac{341143}{1200} & \dfrac{599327}{1800} & -\dfrac{7909}{20} & \dfrac{1045349}{2160} & -\dfrac{2205833}{3600} & \dfrac{41931328}{51975} &&\\
\\
13 & -\dfrac{1365}{2} & \dfrac{591721}{3960} & -\dfrac{2020421}{4320} & \dfrac{44385419}{129600} & -\dfrac{7815847}{17280} & \dfrac{116257063}{241920} & -\dfrac{3192163}{5760} & \dfrac{110721221}{172800} & -\dfrac{13148473}{17280} & \dfrac{19285357}{20736} & -\dfrac{20345507}{17280} & \dfrac{20646903199}{13305600} &\\
\\
14 & \dfrac{16382}{13} & -\dfrac{44863}{180} & \dfrac{34353347}{39600} & -\dfrac{38237381}{64800} & \dfrac{1292711}{1600} & -\dfrac{42972293}{51840} & \dfrac{122732801}{129600} & -\dfrac{92420419}{86400} & \dfrac{53508931}{43200} & -\dfrac{76095383}{51840} & \dfrac{77441609}{43200} & -\dfrac{588168119}{259200} & \dfrac{866732192}{289575} \\
\\
15 & -\dfrac{16383}{7} & \dfrac{1074679}{2548} & -\dfrac{583763}{360} & \dfrac{113982839}{110880} & -\dfrac{12673507}{8640} & \dfrac{58584511}{40320} & -\dfrac{400088153}{241920} & \dfrac{1033251187}{564480} & -\dfrac{23927713}{11520} & \dfrac{193398181}{80640} & -\dfrac{98109773}{34560} & \dfrac{279340567}{80640} & -\dfrac{1060693411}{241920} & \dfrac{467168310097}{80720640} \\
\\
\end{array}
$
}
\vfill
\caption{\label{tab:PhL}Table of coefficients of the polynomials $P_h(L)=\sum_{j=1}^{h-1}p^h_j L^j$ in \eqref{eq:hX1} for $h=2,\dots,15$.}
\end{table*}

\subsection{Class-Upgrade Known-Plaintext Attack}
\label{cukpa}
\noindent A known-plaintext attack may also be attempted by Steve, a second-class receiver aiming to improve its signal recovery performances with the intent of reaching the same quality of a first-class receiver. In this KPA, a partially correct encoding matrix $A^{(0)}$ that differs from $A^{(1)}$ in $c$ entries is also known in addition to $x$ and $y$. With this prior, Steve may compute $\varepsilon=y - A^{(0)} x = \Delta A x$ where $\Delta A=A^{(1)} - A^{(0)}$ {here is an unknown matrix} with ternary entries in $\{-2, 0, 2\}$. Hence, Steve performs a KPA by searching for a set of ternary symbols $\{\Delta A_{j,l}\}_{l=0}^{n-1}$ such that
\begin{equation}
\label{CSkpsteve}
\varepsilon_j = \sum^{n-1}_{l = 0} \Delta A_{j, l} x_l 
\end{equation}
\noindent of which it is known that $\Delta A_{j,l}\neq 0$ only in $c$ cases. Moreover, to ease the solution of this problem and make it row-wise separable, we assume that Steve has access to an even more accurate information, \ie, the exact number $c_j$ of non-zero entries for each row $\Delta A_j$ or equivalently the number of sign flips mapping $A^{(0)}_j$ into the corresponding $A^{(1)}_j$ (clearly, the total number of non-zero entries in $\Delta A$ is $c = \sum_{j=0}^{m-1}c_j$). By assuming this, we may prove the equivalence between Steve's KPA to each row of $A^{(1)}$ and a slightly adjusted SSP.
\begin{defn}[$\gamma$-cardinality Subset-Sum Problem]
Let $\{u_l\}^{n-1}_{l=0}, u_l \in \{1, \ldots, Q\} \subset \mathbb{N}_+$, $\gamma \in \{1,\ldots,n\} \subset \mathbb{N}_+$ {and $\upsilon \in \mathbb{N}_+$}. We define $\gamma$-\emph{cardinality subset-sum problem} ($\gamma$-SSP) the problem of assigning $n$ binary variables $b_l\in\{0,1\}$, $l=0,\dots,n-1$ such that
\begin{eqnarray}
\label{sssteveCS}
\upsilon & = & \sum_{l=0}^{n-1} b_l u_l\\
\label{eq:cconstr}
\gamma & = & \sum_{l=0}^{n-1} b_l
\end{eqnarray}
We define \emph{solution} any $\{b_l\}^{n-1}_{l=0}$ verifying \eqref{sssteveCS} and \eqref{eq:cconstr}.
\end{defn}
\begin{props}[Steve's KPA]
\label{propsteveKPA}
The KPA to $A^{(1)}_j$ given $(x, y, A^{(0)}, c_j)$, is equivalent to a $\gamma$-SSP where $\gamma=c_j$, $Q = 2 L$, $\upsilon=\frac{1}{2}\varepsilon_j + L \, c_j$, $u_l=-A^{(0)}_{j,l} x_l + L$ and $b_l = \frac{1}{2}\left(1-\textstyle\frac{\hat{A}^{(1)}_{j, l}}{A^{(0)}_{j,l}}\right)$. This SSP has a {\em true solution} $\{\bar{b}_l\}_{l=0}^{n-1}$ that is mapped to the row $A^{(1)}_j$, and other {\em candidate solutions} that verify \eqref{sssteveCS} and \eqref{eq:cconstr} but correspond to matrix rows $\hat{A}^{(1)}_j \neq A^{(1)}_j$.
\end{props}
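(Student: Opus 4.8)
The plan is to mirror the change of variables behind Proposition~\ref{prop:SS1} (developed in Appendix~\ref{appSS}), adapting it to the ternary perturbation matrix and to the extra cardinality datum $c_j$ available to Steve. First I would make the row-wise residual explicit: since $\varepsilon = y - A^{(0)} x = \Delta A\,x$ with $\Delta A = A^{(1)} - A^{(0)}$ and every $A^{(0)}_{j,l},A^{(1)}_{j,l}\in\{-1,1\}$, each entry $\Delta A_{j,l}$ lies in $\{-2,0,2\}$ and is non-zero exactly on the $c_j$ positions flipped by \eqref{signflip} in row $j$. Thus $\varepsilon_j=\sum_{l=0}^{n-1}\Delta A_{j,l}x_l$ has precisely $c_j$ active summands, and this is the linear equation Steve must invert.

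The second step is the substitution. Setting $b_l=\tfrac12\bigl(1-\hat{A}^{(1)}_{j,l}/A^{(0)}_{j,l}\bigr)$, the ratio is $\pm1$ because both factors are $\pm1$, so $b_l\in\{0,1\}$ with $b_l=1$ iff position $(j,l)$ is flipped; inverting gives $\hat{A}^{(1)}_{j,l}=A^{(0)}_{j,l}(1-2b_l)$ and hence $\Delta A_{j,l}=-2A^{(0)}_{j,l}b_l$. Plugging this into $\varepsilon_j$ yields $\tfrac12\varepsilon_j=\sum_l(-A^{(0)}_{j,l}x_l)b_l$; adding $L\sum_l b_l$ to both sides and writing $u_l=-A^{(0)}_{j,l}x_l+L$ turns the identity into $\sum_l u_l b_l=\tfrac12\varepsilon_j+L\sum_l b_l$, which is exactly \eqref{sssteveCS} once the cardinality constraint \eqref{eq:cconstr} with $\gamma=c_j$ is adjoined, making the right-hand side the claimed $\upsilon=\tfrac12\varepsilon_j+Lc_j$. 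I would then check the ranges: assuming all $x_l\neq0$ as in Section~\ref{evekpa}, $-A^{(0)}_{j,l}x_l$ runs over $\{-L,\dots,L\}\setminus\{0\}$, so $u_l\in\{0,\dots,2L\}$, consistent with $Q=2L$; the only degenerate value $u_l=0$ (which occurs solely when $|x_l|=L$ with the matching sign of $A^{(0)}_{j,l}$) contributes nothing to \eqref{sssteveCS} and is absorbed exactly as the $x_l=0$ case is in Proposition~\ref{prop:SS1}.

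Finally I would verify the correspondence between problem instances and $\gamma$-SSP solutions. Choosing $\hat{A}^{(1)}_j=A^{(1)}_j$ gives $b_l=\bar b_l=\tfrac12(1-A^{(1)}_{j,l}/A^{(0)}_{j,l})$, the indicator of the genuine flip pattern, which has exactly $c_j$ ones and satisfies \eqref{sssteveCS} since $y_j=A^{(1)}_j x$ by construction; this is the true solution. Conversely every $\{b_l\}$ obeying \eqref{sssteveCS}--\eqref{eq:cconstr} maps through $\hat{A}^{(1)}_{j,l}=A^{(0)}_{j,l}(1-2b_l)$ to a row at Hamming distance $c_j$ from $A^{(0)}_j$ with $\hat{A}^{(1)}_j x=y_j$, and distinct $\{b_l\}$ give distinct rows, so the solutions other than $\{\bar b_l\}$ are precisely the candidate rows $\hat{A}^{(1)}_j\neq A^{(1)}_j$ consistent with $(x,y,A^{(0)},c_j)$. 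The only non-mechanical point — and the one I would be most careful about — is the role of the cardinality constraint: shifting every coefficient by $L$ is what renders them non-negative, but it is compensated inside $\upsilon$ only because $\sum_l b_l$ is pinned to the known $c_j$; this is exactly why Steve must be assumed to know $c_j$, and checking that this assumption is simultaneously used and sufficient to make the $\gamma$-SSP row-wise separable is the substance of the argument, the remainder being the arithmetic above.
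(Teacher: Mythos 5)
Your proposal is correct and follows essentially the same route as the paper's own proof: the same flip-indicator change of variables $b_l$ giving $\Delta A_{j,l}=-2A^{(0)}_{j,l}b_l$, the same shift of the coefficients by $L$ (the paper shifts $D_l=-2A^{(0)}_{j,l}x_l$ by $2L$ and then halves, which is arithmetically identical), and the same use of the cardinality constraint $\sum_l b_l=c_j$ to absorb the shift into $\upsilon=\tfrac12\varepsilon_j+Lc_j$. The only cosmetic difference is your explicit check of the solution-to-row correspondence and your remark that $u_l=0$ is ``absorbed''; the paper simply excludes such indices to fit the $\gamma$-SSP definition, exactly as you suggest by analogy with the $x_l=0$ case.
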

{The derivation of Proposition \ref{propsteveKPA} is reported in Appendix \ref{appSScon}. We define $(x,y, A^{(0)}_j, A^{(1)}_j)$ a {\em problem instance}.} 
In the following, we will denote with $r = \nicefrac{c_j}{n}$ the row-density of perturbations. Since in \cite{sasamoto2001statistical} the $\gamma$-cardinality SSP case is obtained as an extension of the results on the unconstrained SSP, we obtain the following Theorem.
\begin{theorem}[Expected number of solutions for Steve's KPA]
\label{ExpSolSSCon}
For large $n$, the expected number of candidate solutions of the KPA in Proposition \ref{propsteveKPA}, {in which ($i$) all the coefficients $\{u_l\}^{n-1}_{l=0}$ are i.i.d. uniformly drawn from $\{1, \ldots, 2 L\}$, and ($ii$) the true solution $\{\bar{b}_l\}_{l=0}^{n-1}$ is drawn with equiprobable independent binary values}, is
\begin{equation}
\SSteve(n, L, r) \asimeq \sqrt{
\frac{3}{2}
} \frac{
r^{-1-n r}\left(1 - r \right)^{-1-n(1-r)}
}
{
2 \pi n L
}
\label{nosolSSCon}
\end{equation}
\end{theorem}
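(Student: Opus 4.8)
The plan is to combine the reduction of Proposition~\ref{propsteveKPA} with a direct first-moment count, carrying the cardinality constraint \eqref{eq:cconstr} through the computation that produced Theorem~\ref{ExpSolSS}. By Proposition~\ref{propsteveKPA}, counting candidate rows $\hat A^{(1)}_j$ amounts to counting binary vectors $b$ with $\sum_l b_l = \gamma = c_j = n r$ and $\sum_l b_l u_l = \upsilon$, where the $u_l$ are i.i.d.\ uniform on $\{1,\dots,2L\}$ and $\upsilon = \sum_l \bar b_l u_l$ is the value attained by the true solution $\bar b$. Since $\bar b$ is itself a solution of the $\gamma$-SSP it has cardinality $\gamma$ (assumption (ii) being read conditionally on \eqref{eq:cconstr}, i.e.\ $\bar b$ uniform among cardinality-$\gamma$ vectors), and by exchangeability of the $u_l$ we may fix $\bar b = (1,\dots,1,0,\dots,0)$ without loss of generality. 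Writing $N = \sum_{b:\,|b|=\gamma}\mathbb 1\!\left[\sum_l (b_l-\bar b_l)\,u_l = 0\right]$, the constraint $|b|=\gamma$ is non-random, so $\Exp[N] = \sum_{b:\,|b|=\gamma}\Pro\!\left[\sum_l (b_l-\bar b_l)\,u_l = 0\right]$.

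I would then stratify this sum by the overlap between $b$ and $\bar b$. A vector $b$ with $|b|=\gamma$ for which $|b\setminus\bar b| = |\bar b\setminus b| = h$ makes $d_l := b_l-\bar b_l$ equal $+1$ on $h$ coordinates, $-1$ on $h$ coordinates, and $0$ elsewhere, and there are exactly $\binom{\gamma}{h}\binom{n-\gamma}{h}$ such $b$. For each of them $\sum_l d_l u_l$ is a centred, $\mathbb Z$-aperiodic sum of $2h$ independent $\pm(\text{uniform on }\{1,\dots,2L\})$ variables with variance $2h\cdot\frac{(2L)^2-1}{12}\sim\tfrac23 hL^2$, so a local central limit theorem gives $\Pro[\sum_l d_l u_l = 0]\sim\frac1L\sqrt{\frac{3}{4\pi h}}$ for $h\ge 1$ (the term $h=0$ is the true solution, contributing a negligible $+1$). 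Hence $\Exp[N]\sim\frac1L\sqrt{\frac{3}{4\pi}}\sum_{h\ge1}\binom{\gamma}{h}\binom{n-\gamma}{h}\,h^{-1/2}$, and it remains to evaluate this combinatorial sum for fixed $r\in(0,1)$ as $n\to\infty$.

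Here the summand $\binom{\gamma}{h}\binom{n-\gamma}{h}$ is sharply concentrated (width $\Theta(\sqrt n)$) about its mode $h^\star=\gamma(n-\gamma)/n = n r(1-r)$, over which the slowly varying factor $h^{-1/2}$ may be frozen at $(h^\star)^{-1/2}$, while $\sum_h\binom{\gamma}{h}\binom{n-\gamma}{h}=\binom{n}{\gamma}$ by Vandermonde's identity and $\binom{n}{\gamma}\sim\big(2\pi n r(1-r)\big)^{-1/2}r^{-nr}(1-r)^{-n(1-r)}$ by Stirling. Assembling $\frac1L\sqrt{\tfrac{3}{4\pi}}\cdot\big(n r(1-r)\big)^{-1/2}\cdot\big(2\pi n r(1-r)\big)^{-1/2}r^{-nr}(1-r)^{-n(1-r)}$ and simplifying the numerical constant to $\sqrt{3/2}/(2\pi)$ yields exactly \eqref{nosolSSCon}. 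The same answer can be reached along the route hinted at by \cite{sasamoto2001statistical}: represent both indicator constraints by Fourier integrals of the Kronecker delta, factorise over $l$, average over the $u_l$ (and $\bar b_l$), and evaluate the resulting two-dimensional integral by a saddle point — the conjugate variable to \eqref{eq:cconstr} then sits at a non-trivial (imaginary) saddle that produces the entropic factor $r^{-nr}(1-r)^{-n(1-r)}$, the conjugate variable to the value constraint yields the Gaussian $1/(L\sqrt n)$, and the Hessian supplies the remaining $1/n$ and $\sqrt{3/2}/(2\pi)$.

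The step I expect to be delicate is making the local central limit theorem \emph{uniform in $h$}: one needs $\Pro[\sum_l d_l u_l=0] = \frac1L\sqrt{\frac{3}{4\pi h}}\big(1+O(h^{-1/2})\big)$ with a constant not depending on $h$, together with a crude envelope bound, so that the error survives summation over the $\Theta(n)$ relevant values of $h$; and one must separately check that the small-$h$ tail (where the Gaussian approximation is poor) and the near-boundary tail (where Stirling degrades) are genuinely lower order — both are, because $\binom{\gamma}{h}\binom{n-\gamma}{h}$ is exponentially small away from $h^\star$. In the saddle-point rendering the analogous obstacle is the justification of the contour deformation and Gaussian reduction around a complex saddle for a \emph{lattice} sum, which is precisely where \cite{sasamoto2001statistical} argues heuristically; the local-limit/Stirling argument above is the way I would make that rigorous.
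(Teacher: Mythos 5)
Your argument is correct and arrives at exactly \eqref{nosolSSCon}, but it does so by a genuinely different route than the paper. The paper's proof imports the fixed-value solution count of the $\gamma$-cardinality SSP from \cite[(5.3)--(5.9)]{sasamoto2001statistical} — a saddle-point expression $\SSteve(\tau,n,L,r)$ in the normalised sum $\tau$ involving the functions $F_p,G_p$ and the conjugate parameters $a(\tau,r),b(\tau,r)$ — and then averages over $\tau$ with density proportional to $\SSteve(\tau,n,L,r)$, approximating the profile by a Gaussian peaked at $\tau=\frac{r}{2}$ (whence the factor $\frac{1}{\sqrt{2}}$) and using $a\left(\frac{r}{2},r\right)=0$, $b\left(\frac{r}{2},r\right)=\log\frac{r}{1-r}$. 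You instead compute the first moment directly: conditioning on a weight-$\gamma$ true solution, stratifying candidates by the overlap $h$, applying a local CLT to the centred lattice sum $\sum_l d_l u_l$ (the cardinality constraint guarantees equal numbers of $+1$ and $-1$ disagreements, so every stratum is centred — a point worth stating explicitly, since it is what makes your stratification clean here and not in Eve's case), and then evaluating $\sum_h \binom{\gamma}{h}\binom{n-\gamma}{h}h^{-1/2}$ via hypergeometric concentration at $h^\star=nr(1-r)$, Vandermonde and Stirling; the constants indeed assemble to $\sqrt{3/2}/(2\pi)$. What your route buys is elementarity and a clearer path to rigour: the only analytic input is a LCLT made uniform in $h$ (which you correctly flag as the delicate step), it bypasses both the statistical-mechanics asymptotics and the heuristic ``density of $\tau$ proportional to the number of solutions'' averaging, and it makes the entropic factor transparent as $\binom{n}{nr}$; you also correctly repair the paper's loosely stated hypothesis (ii) by reading the true solution as uniform among weight-$c_j$ vectors, which is the intended model. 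What the paper's route buys is brevity, direct reuse of \cite{sasamoto2001statistical}, structural parallelism with the proof of Theorem \ref{ExpSolSS}, and the intermediate object $\SSteve(\tau,n,L,r)$ itself, i.e.\ the dependence of the solution count on the observed value $\upsilon$, which your averaged count does not expose. Your closing sketch of the Fourier/saddle-point derivation is essentially the paper's reference argument, so the two approaches are consistent; no gap beyond the uniformity issue you already identified and correctly argue is controllable.
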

\begin{figure*}[!t]
	\centering
	\null\hfill
	\subfloat[]{\centering\includegraphics[height=140pt]{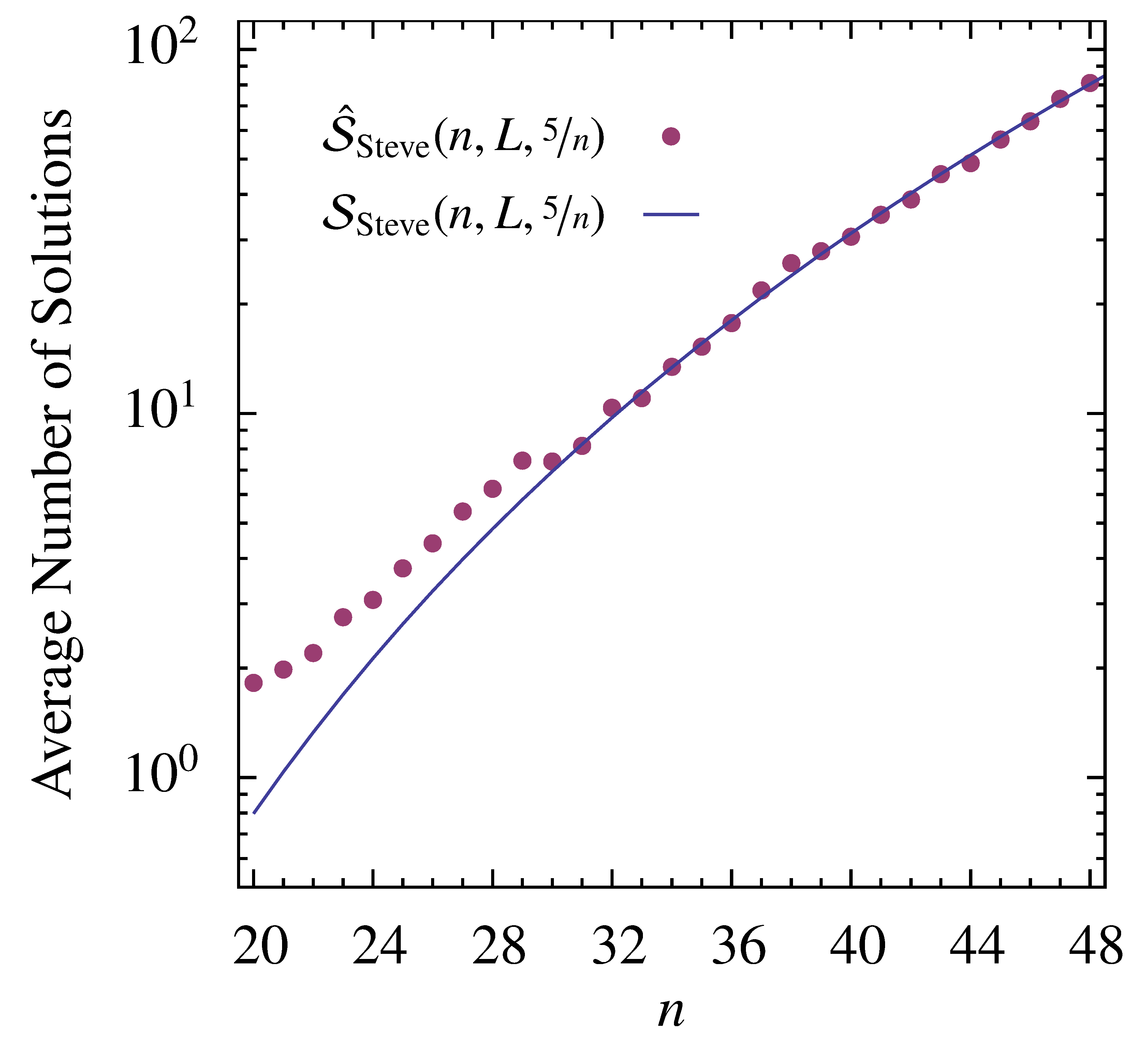}}
	\hfill
	\subfloat[]{\centering\includegraphics[height=140pt]{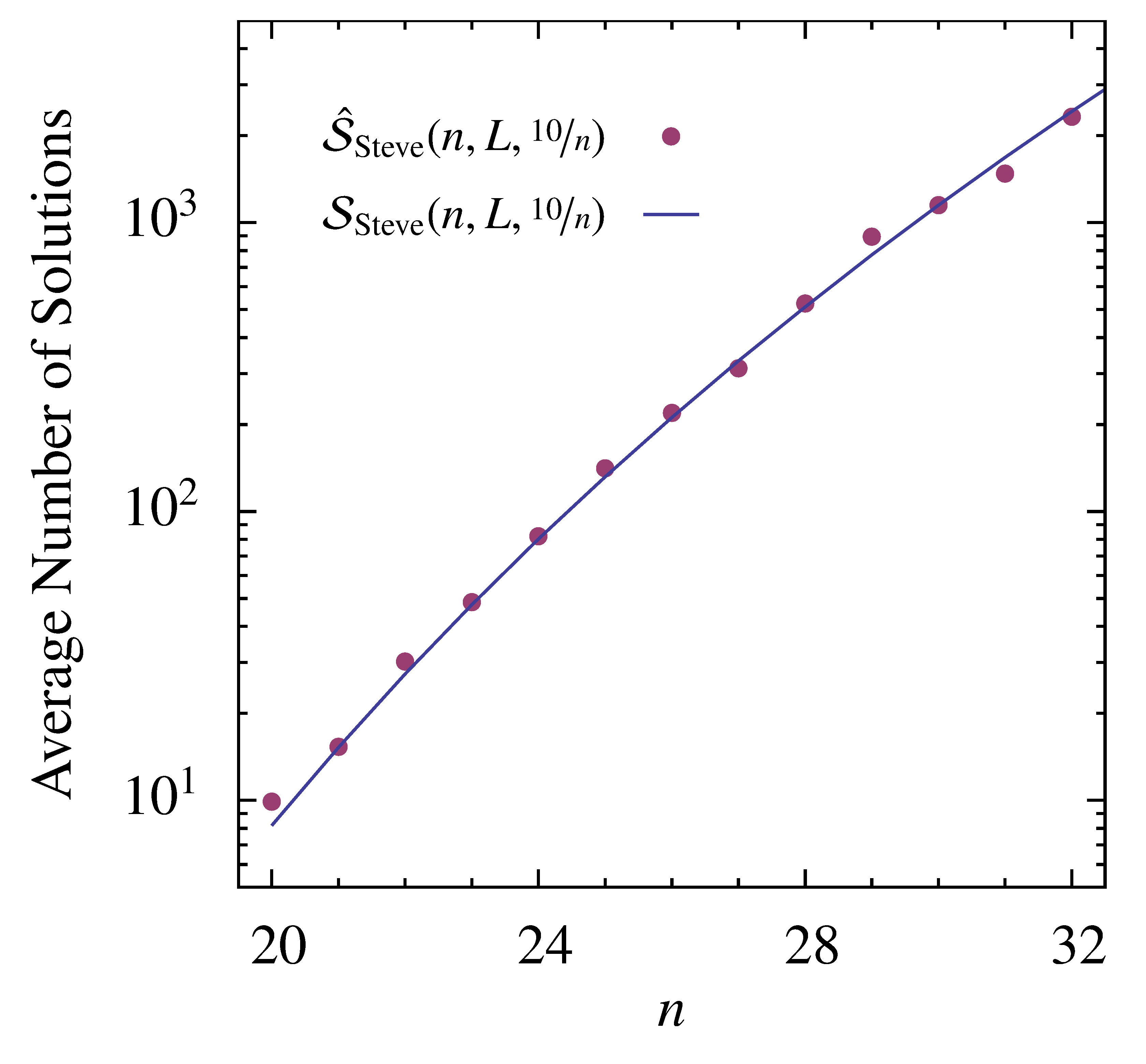}}
	\hfill
	\subfloat[]{\centering\includegraphics[height=140pt]{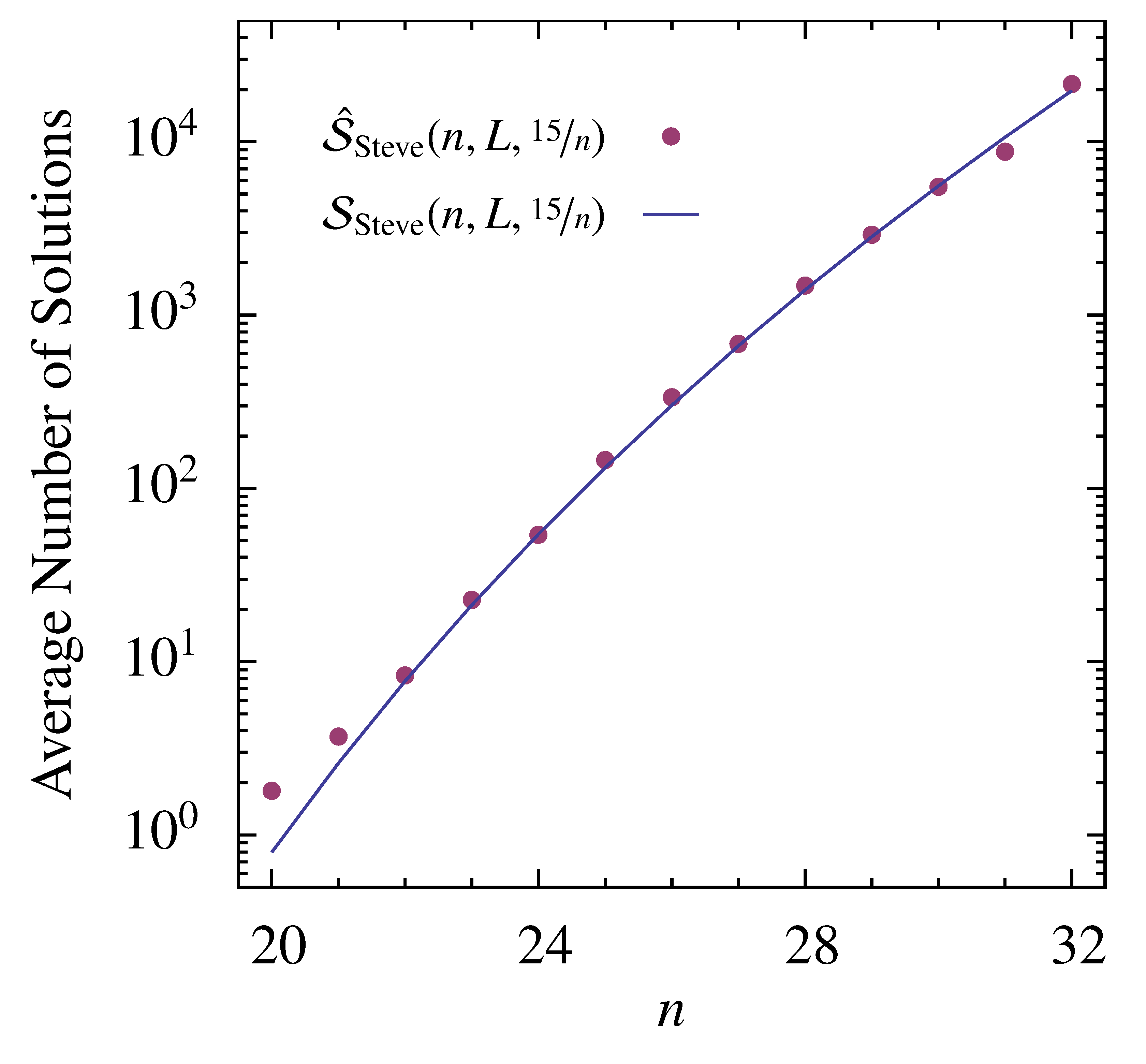}}
	\hfill\null
	\caption{Sample average of the number of solutions for Steve's KPA compared to the theoretical value of \eqref{nosolSSCon} for $L = 5 \cdot 10^3$ with row-density of perturbations $r = \nicefrac{5}{n}, \nicefrac{10}{n}, \nicefrac{15}{n}$.}
	\label{fig:enumsolKPsteve}
\end{figure*}
The proof of Theorem \ref{ExpSolSSCon} is reported in Appendix \ref{appSScon}. The number of candidate solutions found by Steve's KPA is by many orders of magnitude smaller than Eve's KPA, the reason being that Steve requires much less information to achieve complete knowledge of the true encoding $A^{(1)}$. 
In order to provide numerical evidence, {we find all the solutions to Steve's KPA by means of the binary programming solver in CPLEX} on a set of 50 {randomly generated} problem instances for $L=5 \cdot 10^3$, a row-density of perturbations $r = \nicefrac{5}{n}, \nicefrac{10}{n}, \nicefrac{15}{n}$ {and $n = 20, \ldots, 32$ (except for $r = \nicefrac{5}{n}$, whose solution enumeration is still computationally feasible up to $n = 48$)}. 
The sample average of the number of solutions, $\hat{\mathcal{S}}_\text{\rm Steve}(n, L, r)$, is reported in Fig. \ref{fig:enumsolKPsteve} and well predicted by the theoretical value in \eqref{nosolSSCon}; note that this approximation is increasingly accurate for large $n$. 
Moreover, by resuming the previous example our $n = \unit[64 \times 64]{pixel}$ grayscale image quantised at $B_x = \unit[8]{bit}$ and encoded with two-class CS using $\Delta A$ with $r = 0.03$ will have on the average $6.25 \cdot 10^{234}$ candidate solutions of indistinguishable quality. 

{
In terms of encryption key lifetime, leveraging the same considerations of Section \ref{evekpa} and simply replacing $\SEve(n,L)$ with $\SSteve(n, L, r)$ yields the key lifetimes $T$ with respect to class-upgrade attacks; as an example, plugging $n = 4096, L = 128, r = 0.03$ in \eqref{nosolSSCon} and assuming $\zeta = 0.9999$, yields at most $T = 1.25 \cdot 10^{231}$ attack opportunities for Steve.
}

The previous KPA analyses hinge on a counting argument in a general setting, without any other side information on the structure of $A^{(1)}$ or $\Delta A$. 
As we will show in the experiments of Section \ref{sec:ae}, KPAs yield no advantage in terms of recovery performances to unintended receivers.
Obviously, as further prior information becomes available (for example the knowledge that the unknown $\Delta A$ has additional structure, or that the original signal is distributed is a non-uniform fashion 
\cite{Mangia_2012, cambareri2013design}) revealing the hidden information may be easier. Yet, this is true for any {encryption} scheme in which either the encryption key or the plaintext have a non-uniform distribution and is out of the scope of this analysis.

\subsection{Signal Recovery-Based Class-Upgrade Attacks}
\label{othercua}
\noindent Class-upgrade attacks to two-class encryption schemes are closely related to a recovery problem setting that has attracted some attention {in prior works}, \ie, {\em sparse signal recovery under matrix uncertainty}. 
To recast our problem in this setting, we may construct such a signal recovery-based attack by letting $A^{(1)}=A^{(0)}+\Delta A$ as the encoding matrix, where $A^{(0)}$ is known {\em a priori} and $\Delta A$ is an unknown random perturbation matrix. This information is paired with the knowledge of the ciphertext $y$ and a prior on the unknown plaintext $x$, that is known to be sparse in a basis $D$. Thus, we attempt the {\em joint recovery} of ${x}$ and $\Delta A$, eventually just leading to a refinement of the estimated $\hat{x}$. Two main algorithms are capable of addressing specifically this problem setup for a generic $\Delta A$, namely Generalised Approximate Message-Passing under Matrix Uncertainty (MU-GAMP \cite{parker2011compressive}) and Sparsity-cognisant Total Least-Squares (S-TLS \cite{zhu2011sparsity}). 

Although appealing, this joint recovery approach can be anticipated to fail for multiple reasons. 
First, this attack is intrinsically harder than Steve's KPA in that the true plaintext $x$ is here unknown. Whatever $\Delta A$ is a candidate solution to Steve's KPA given $x$, is also a possible solution of joint recovery with the same $x$ as a further part of the solution. Since we know from Section \ref{cukpa} that Steve's KPA typically has a huge number of indistinguishable and equally-sparse candidate solutions, at least as many will verify the joint recovery problem when the plaintext is also unknown. 
Hence, this approach has negligible odds of yielding more information on $\Delta A$ than Steve's KPA.

{Note that this relationship between the set of solutions to Steve's KPA and joint recovery-based attacks also prevents the latter from being of any use as a {\em refinement step} to improve $\Delta A$ after its guess by an initial KPA. In fact, recovering an estimate of $x$ in this case would be to no avail, since the true $x$ must be known {\em a priori} in the initial KPA. 

Notwithstanding this, the above joint recovery approach estimates $x$ along with a new $\Delta A$; thus, the best-case achievable signal recovery {is} the true $x$, for which the candidate solutions in $\Delta A$ are at best identical to those of the initial KPA, as by \eqref{CSkpsteve} they must verify $\varepsilon = \Delta A x$. No improvement is therefore obtained by applying joint recovery after Steve's KPA.  
}

Furthermore, going back to simple joint-recovery, note that it amounts to solving $y=A^{(0)}x+\Delta A x$ with $\Delta A$ and $x$ unknown, that is clearly a non-linear equality involving non-convex/non-concave operators. In general, this is a hard problem; both the aforementioned algorithms are indeed able to effectively compensate matrix uncertainties when $\Delta A$ depends on a low-dimensional, {\em deterministic} set of parameters. However, such a model does not apply to two-class encryption: even if $\Delta A$ is $c$-sparse, it has no deterministic structure -- to make it so, one would need to know the exact set $C^{(0)}$ of $c$ index pairs at which the sign flipping randomly occurred, which by itself entails a combinatorial search. 

In fact, $\Delta A$ is \emph{uniform} in the sense of \cite{parker2011compressive} since it may be regarded as a realisation of a random matrix with i.i.d. zero-mean, bounded-variance entries (as also detailed in \cite{cambareri2014low}). 
Hence, we expect the accuracy of the estimate $\hat{x}$ with joint recovery (both using S-TLS and MU-GAMP) to agree with the uniform matrix uncertainty case of \cite{parker2011compressive}, where negligible improvement is shown with respect to the (non-joint) recovery algorithm GAMP \cite{rangan2011generalized}. The advocated reason is that the perturbation noise $\varepsilon = \Delta A x$ is asymptotically Gaussian for a given $x$ \cite[Proposition 2.1]{parker2011compressive}.

We now provide some empirical evidence on the ineffectiveness of joint recovery as a class-upgrade attack for finite $n, m$ and sparsity $k$.
\begin{figure}[t]
	\centering
	\includegraphics[width=3.4in,clip=true,trim=10 0 0 0]{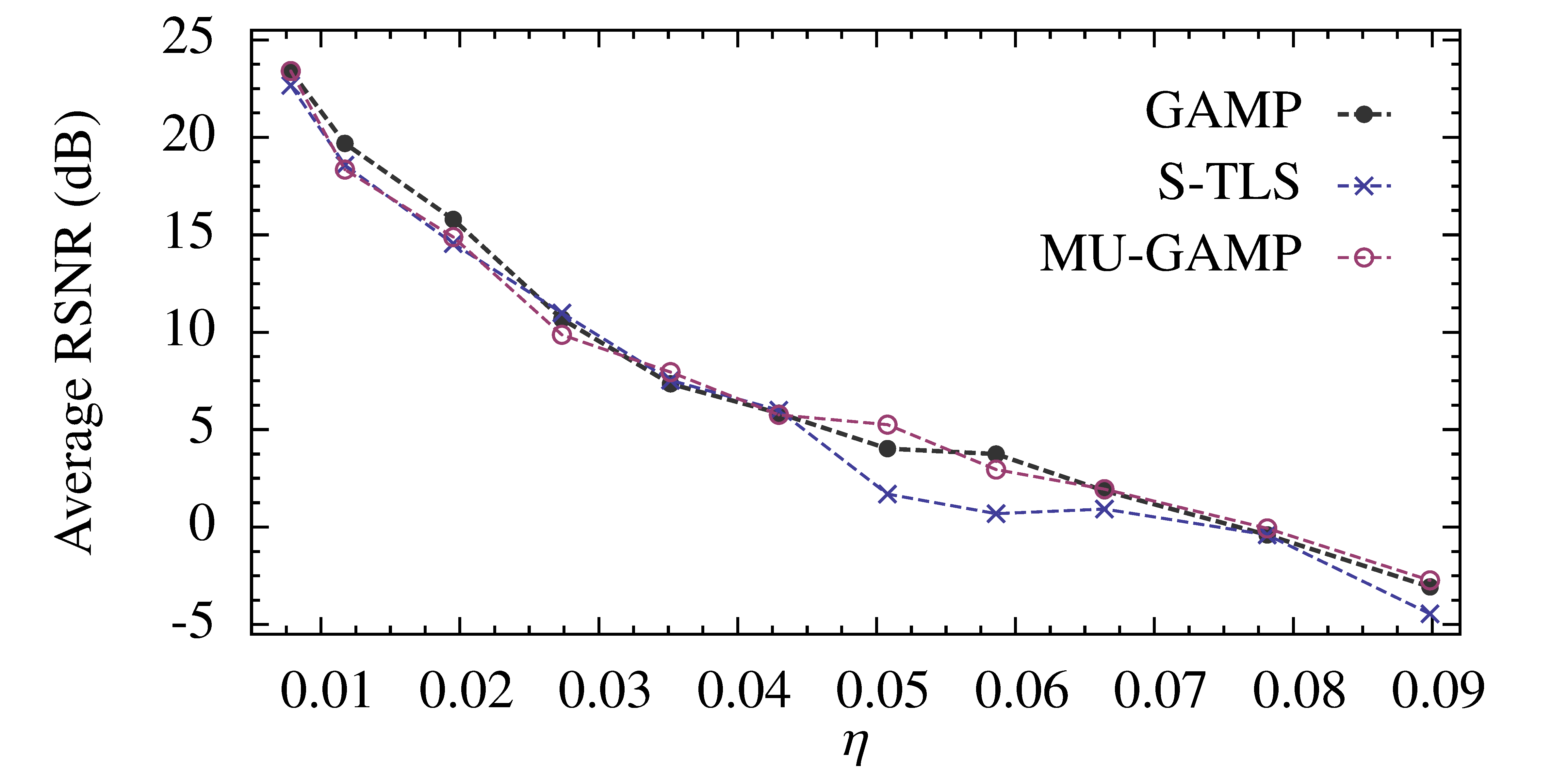}
	\caption{Average recovery signal-to-noise ratio performances of a class-upgrade attack using signal recovery under matrix uncertainty algorithms.}
	\label{fig:jointCSrnd}
\end{figure}
As an example, we let $n=256$, $m = 128$, $k = 20$ and $\eta = \frac{c}{m n} \in [0.005, 0.1]$ and generate $100$ random instances of $x = D s$ with $s$ which is $k$-sparse with respect to a randomly selected, known orthonormal basis $D$. For each $\eta$, we also generate $100$ pairs of matrices $(A^{(0)}, A^{(1)})$ related as \eqref{signflip} and encode $x$ by $y=A^{(1)}x$. 
Signal recovery is performed by MU-GAMP, S-TLS and GAMP. To maximise their performances, each of the algorithms is run with parameters provided by a ``genie'' revealing the exact value of the unknown features of $x$. In particular, MU-GAMP and GAMP are provided with an i.i.d. Bernoulli-Gaussian sparsity-enforcing signal model \cite{rangan2011generalized, vila2011expectation} having the exact mean, variance and sparsity level of the instances $s$. As far as the perturbation $\Delta A$ is concerned, MU-GAMP is given the probability distribution of its i.i.d. entries. On the other hand, GAMP is initialised with the noise variance of $\varepsilon=\Delta A x$, that is assumed Gaussian with i.i.d. entries. S-TLS is run in its locally-optimal, polynomial-time version \cite[Section IV-B]{zhu2011sparsity} and fine-tuned with respect to its regularisation parameter as $\eta$ varies. 

We here focus on measuring the Average\footnote{$\hat{\mathbb{E}}(\cdot)$ denotes the sample average over a set of trials.} Recovery Signal-to-Noise Ratio of the latter, $\unit{ARSNR\, (dB)} = 10 \log_{10} \hat{\mathbb{E}}\left(\textstyle \frac{\|x\|^2_2}{\|x - \hat{x}\|^2_2}\right)$ reported in Fig. \ref{fig:jointCSrnd}. The standard deviation from this average is less than $\unit[1.71]{dB}$ in all the reported curves. The maximum $\unit{ARSNR}$ performance gap between GAMP and MU-GAMP is $\unit[1.22]{dB}$ while S-TLS attains generally lower performances for high values of $\eta$. These observed performances confirm what is also found in \cite{parker2011compressive}, \ie, that GAMP, MU-GAMP and S-TLS substantially attain the same performances under uniform matrix uncertainty. As expected, class-upgrade attacks based on joint recovery are ineffective even for finite $n$ and $m$, since GAMP under the same conditions is the reference case adopted in \cite[Section IV]{cambareri2014low} for the design of two-class encryption schemes. 

\section{Numerical Examples}
\label{sec:ae}
\begin{figure*}[t] 
	\centering
	\null\hfill\subfloat[\label{fig:EveKPA_ECG}]{\includegraphics[height=175pt]{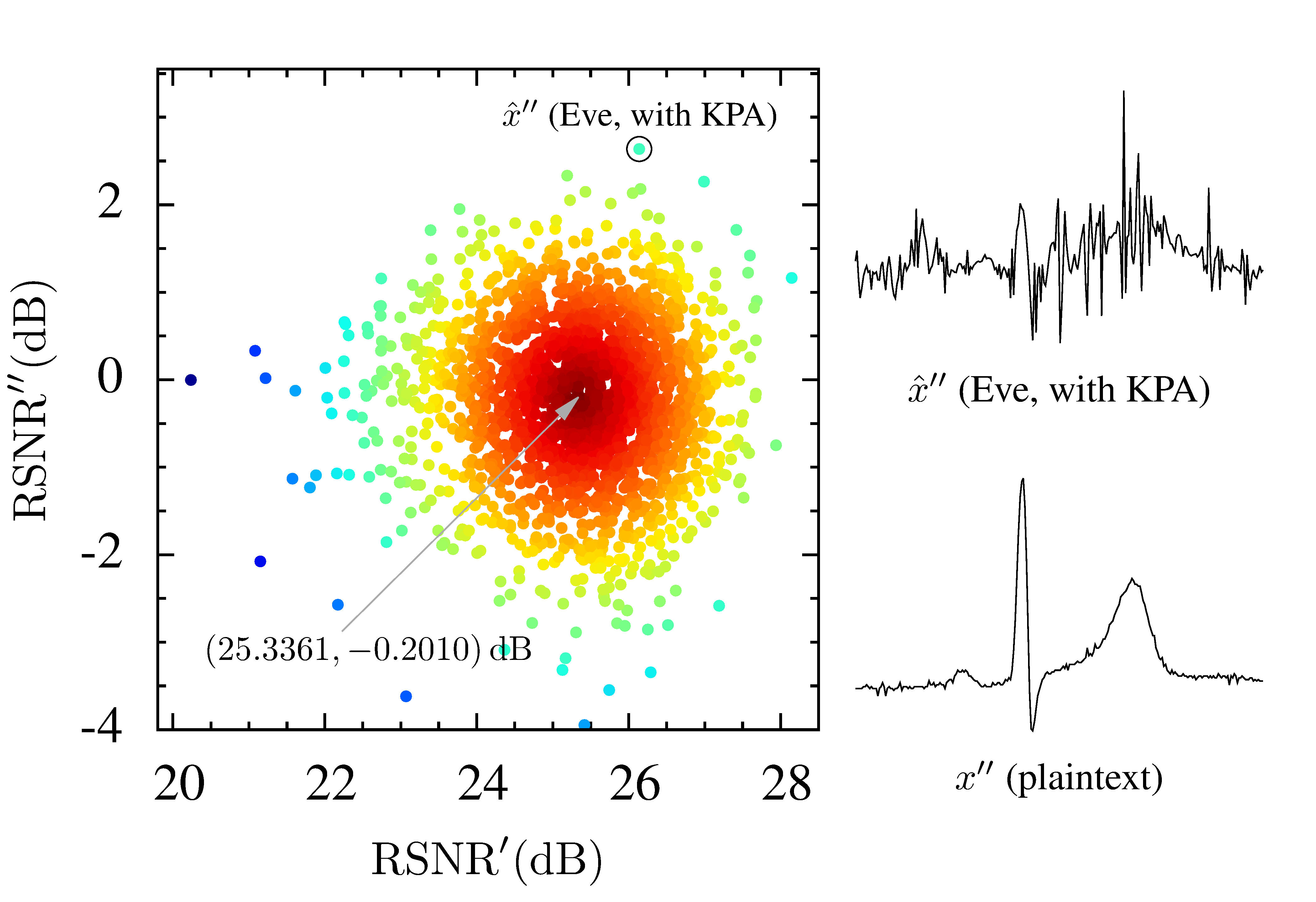}}\hfill
	\subfloat[\label{fig:SteveKPA_ECG}]{\includegraphics[height=175pt]{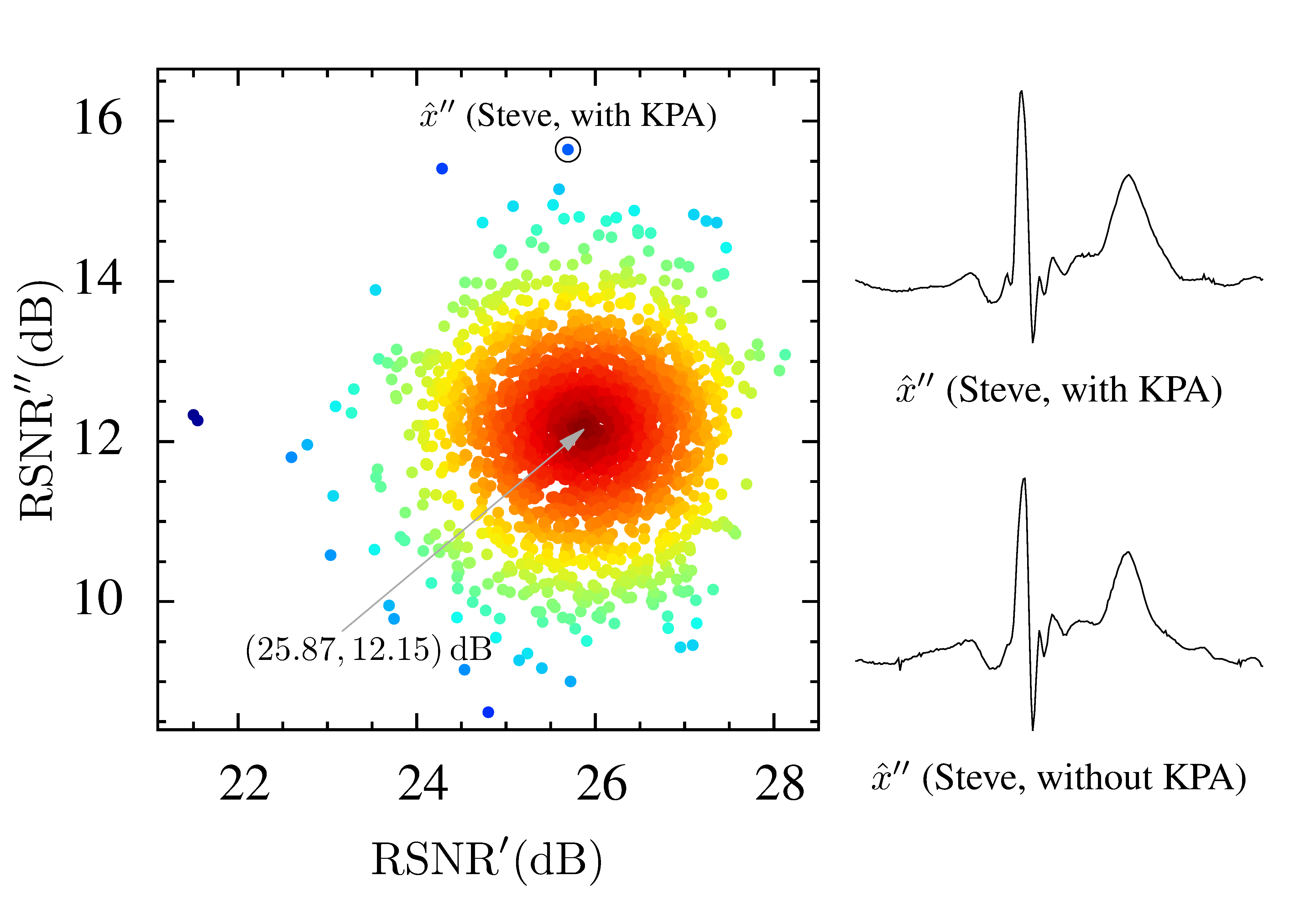}}\hfill\null
	\caption{\label{fig:ECG} Effectiveness of (a) Eve and (b) Steve's KPA in recovering a hidden ECG. Each point is a guess of the encoding matrix $A^{(1)}$ whose quality is assessed by decoding the ciphertext $y'$ corresponding to the known plaintext $x'$ ($\unit{RSNR}'$) and by decoding a new ciphertext $y''$ ($\unit{RSNR}''$). The Euclidean distance from the average $(\unit{RSNR}', \unit{RSNR}'')$ is highlighted by colour gradient.}
\end{figure*}

{This Section aims at providing an intuitive appreciation of the poor quality obtained by signal recovery with KPA solutions. While the objective of KPAs is cryptanalysing the true encoding matrix to ultimately retrieve the encryption key, we here focus on the properties of KPA solutions as encoding matrix guesses that can, in the attackers' belief, improve their signal recovery quality. Thus, we verify that this improvement does not occur by exemplifying {practical cases of} KPAs in a common framework, which follows this procedure:
\begin{enumerate}[leftmargin=*]
\setlength\itemsep{0.25em}
\item {\em Attack}: an attacker performing a KPA gains access to a single plaintext-ciphertext pair $(x', y')$, and attacks the corresponding true encoding matrix $A^{(1)}$ row-by-row; we here infer each row $A^{(1)}_j$ by generating instances of an i.i.d. antipodal random vector until a large number of candidate solutions $\hat{A}^{(1)}_j$ that verify $y'_j=\hat{A}^{(1)}_j x'$ is found. 

Thus, the inferred $\hat{A}^{(1)}$ is composed by collecting the outputs of $m$ Monte Carlo random searches for the corresponding matrix rows. This generation approach is preferable to solving each attacker's KPA by means of CPLEX's binary programming solver for two reasons. Firstly, it is known from Theorem \ref{ExpSolSS} that the expected number of solutions is very large and thus the probability of {finding one by} random search is far from being negligible, while its computational cost is relatively low. {Secondly}, the theoretical conditions \cite{candes2008restricted} that guarantee $x'$ can be retrieved from $y'$ despite the dimensionality reduction are applicable when $A^{(1)}$ is a typical realisation of an antipodal random matrix. On the contrary, integer programming solvers explore solutions in a systematic way, and tend to generate them in an ordered fashion. When only some of these solutions are considered (as obliged when $n$ is large), this {ordered approach yields} non-typical sets of $\hat{A}^{(1)}_j$ that could be very distant from $A^{(1)}_j$;

\item {\em Signal Recovery}: to test its guess $\hat{A}^{(1)}$, the attacker may then pretend to ignore the {\em known} $x'$ and recover {an} approximation $\hat{x}'$ from $(y', \hat{A}^{(1)})$ by using a high-performance signal recovery algorithm {such as GAMP \cite{rangan2011generalized}, optimally tuned as in Section \ref{othercua}}. In this setting we measure its accuracy by the Recovery Signal-to-Noise Ratio, $\unit{RSNR}'=10\log_{10}\frac{\|x'\|^2_2 }{\|x'-\hat{x}'\|^2_2}$, which is the only quality indicator in the attacker's perspective for $\hat{A}^{(1)}$. The $\unit{RSNR}'$ performances are here expected to match those of a (first-class) receiver fully informed on $A^{(1)}$, as the equality $y' = \hat{A}^{(1)} x'$ is verified regardless of the exactness of $\hat{A}^{(1)}$;

\item {\em Verification}: as a further test of $\hat{A}^{(1)}$, the attacker attempts the recovery of a second, {\em unknown} plaintext ${x}''$ encoded as $y''={A}^{(1)} {x}''$, of which it is only {\em known} that it was obtained with the same encoding matrix as $y'$. 

The recovery $\hat{x}''$ is then obtained by means of GAMP, yielding a new $\unit{RSNR}''=10\log_{10}\frac{\|x''\|^2_2}{\|x''-\hat{x}''\|^2_2}$ {\em unknown} to the attacker. If any point with high $\unit{RSNR}'' \approx \unit{RSNR}'$ is found, this will indicate the attacker's success at guessing $\hat{A}^{(1)}$ close to the true $A^{(1)}$. We will show how this never occurs with a large number of candidate solutions, and detail how the observed $(\unit{RSNR}',\unit{RSNR}'')$ pairs are distributed. 
\end{enumerate}
Both the practical examples of Eve and Steve's KPA follow the same procedure, with the exception that Eve directly generates $\hat{A}^{(1)}_j$, whereas} Steve generates each row $\hat{A}^{(1)}_j$ by random search of the index set $C^{(0)}_j$ that maps the {\em known} $A^{(0)}_j$ to the guess $\hat{A}^{(1)}_j$ that verifies $y'_j= \hat{A}^{(1)}_j x'$. Coherently with the theoretical setting of Section \ref{cukpa}, we also assume that Steve knows that exactly $c_j$ entries {of $A^{(0)}$} have been flipped {in each row of $A^{(1)}$}. Repeating this search for $m$ rows in both attacks provides Eve and Steve's candidate solutions $\hat{A}^{(1)}$, of which we will study how the corresponding $(\unit{RSNR}',\unit{RSNR}'')$ pairs are distributed as mentioned above.

\subsection{Electrocardiographic Signals}
\noindent We now consider ECG signals from the MIT PhysioNet database \cite{PhysioNet} sampled at $f_s=\unit[256]{Hz}$ and encoded as described, from two windows $x', x''$ of $n = 256$ samples (and quantised with $B_x = \unit[12]{bit}$) into the measurement vectors $y', y''$ of dimensionality $m = 90$. Decoding is allowed by the sparsity level of the windowed signal when decomposed with $D$ chosen as {a Symmlet-6 orthonormal wavelet basis} \cite{mallat1999wavelet}.

We generate $2000$ candidate solutions for both Eve and Steve's KPA that correspond to the recovery performances reported in Fig. \ref{fig:ECG}. While both malicious users are able to reconstruct the known plaintext $x'$ with a relatively high average $\unit{RSNR}' \approx \unit[25]{dB}$ (their KPAs indeed yield solutions to $y' = \hat{A}^{(1)} x'$), on the second window of samples $x''$  the eavesdropper achieves an average $\unit{RSNR}''\approx \unit[-0.20]{dB}$ (Fig. \ref{fig:EveKPA_ECG}), whereas the second-class decoder achieves an average $\unit{RSNR}''\approx \unit[12.15]{dB}$ (Fig. \ref{fig:SteveKPA_ECG}) when the two-class encryption scheme is set to a sign flipping density $\eta=\nicefrac{c}{m n} = 0.03$ between $A^{(0)}$ and $A^{(1)}$. In this case, the nominal second-class $\unit[]{RSNR} = \unit[11.08]{dB}$ when reconstructing $x''$ from $y''$ with $A^{(0)}$, while the correlation coefficient between $\unit[]{RSNR}'$ and $\unit[]{RSNR}''$ is $0.0140$; these figures clearly highlight the ineffectiveness of KPAs at inferring $A^{(1)}$ in this case. This is also confirmed by the perceptual quality of $\hat{x}''$ corresponding to the maximum $\unit{RSNR}''$ highlighted in Fig. \ref{fig:ECG}.
\subsection{Sensitive Text in Images} 
\begin{figure*}[t] 
	\centering	
 \null\hfill\subfloat[\label{fig:EveKPA_IMG}]{\includegraphics[height=175pt]{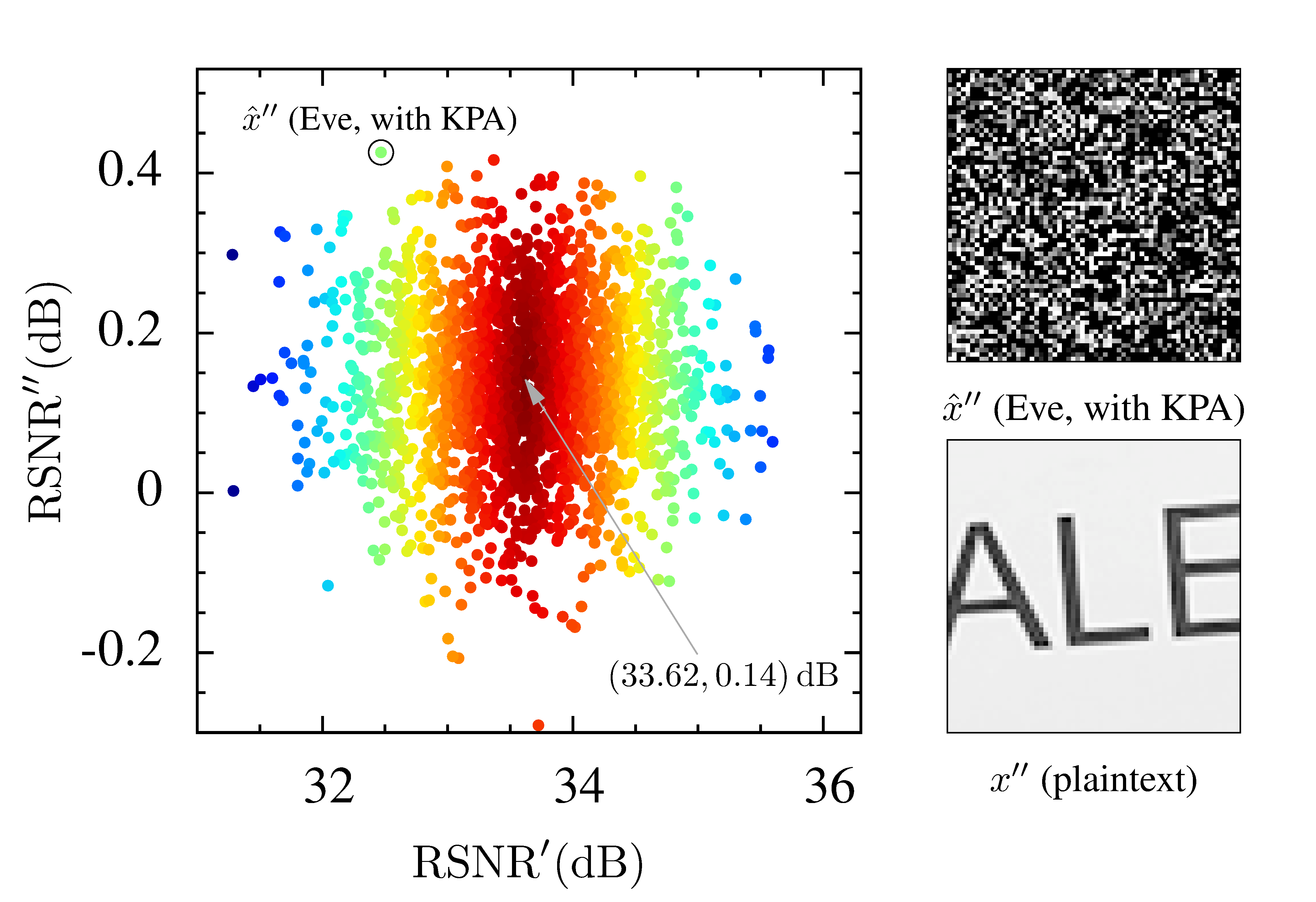}}\hfill\subfloat[\label{fig:SteveKPA_IMG}]{\includegraphics[height=175pt]{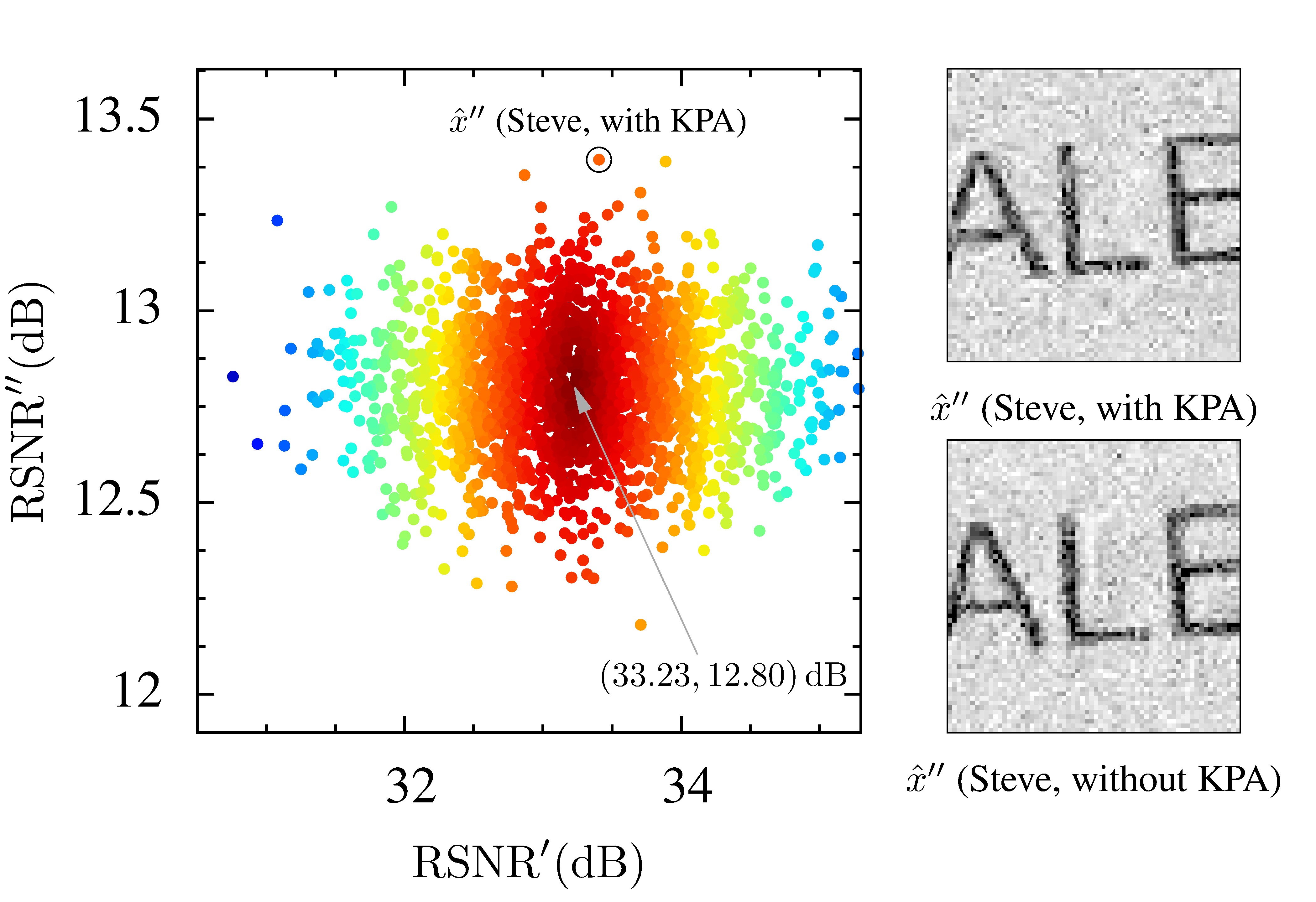}}\hfill\null
	\caption{\label{fig:Img}Effectiveness of (a) Eve and (b) Steve's KPA in recovering hidden image blocks. Each point is a guess of the encoding matrix $A^{(1)}$ whose quality is assessed by decoding the ciphertext $y'$ corresponding to the known plaintext $x'$ ($\unit{RSNR}'$) and by decoding a new ciphertext $y''$ ($\unit{RSNR}''$). The Euclidean distance from the average $(\unit{RSNR}', \unit{RSNR}'')$ is highlighted by colour gradient.}	
\end{figure*}
\noindent In this example we consider the same test images used in \cite{cambareri2014low}, \ie, $\unit[640 \times 512]{pixel}$ grayscale images of people holding a printed identification text concealed by means of two-class encryption. To reduce the computational burden of KPAs we assume a block size of $\unit[64 \times 64]{pixel}$, $B_x = \unit[8]{bit}$ per pixel, and encode the resulting $n=4096$ pixels into $m=2048$ measurements. Signal recovery is performed by assuming the blocks have a sparse representation on a 2D Daubechies-4 wavelet basis \cite{mallat1999wavelet}. Two-class encryption is applied on the blocks containing printed text: we choose two adjacent blocks $x', x''$ containing some letters and encoded with the same $A^{(1)}$; in this case, the second-class decoder nominally achieves $\unit{RSNR}=\unit[12.57]{dB}$ without attempting class-upgrade due to the flipping of $c = 251658$ entries (corresponding to a perturbation density $\eta = 0.03$) in the encoding matrix.

In order to test Eve and Steve's KPA we randomly generate $2000$ solutions for the $j$-th row of the encoding given $x',y'$: it is worth noting that while in the previous case the signal dimensionality is sufficiently small to produce a solution set in less than two minutes, in this case generating $2000$ different solutions for a single row may take up to several hours for some particularly hard instances.

By using these candidate solutions to find $\hat{x}', \hat{x}''$ we obtain the results of Figure \ref{fig:Img}: while both attackers attain an average $\unit{RSNR}'\approx \unit[33]{dB}$ on $x'$, Eve is only capable of reconstructing $x''$ with an average $\unit{RSNR}''\approx \unit[0.14]{dB}$ where Steve reaches an average $\unit{RSNR}''\approx \unit[12.80]{dB}$ with $\eta = 0.03$.
Note also that, although some lucky guesses exist with $\unit{RSNR}''>\unit[12.57]{dB}$, it is impossible to identify them by looking at $\unit{RSNR}'$ since the correlation coefficient between $\unit{RSNR}'$ and $\unit{RSNR}''$ is $-0.0041$. Therefore, Steve cannot rely on observing the $\unit{RSNR}'$ to choose the best performing solution $\hat{A}^{(1)}$, so both Eve and Steve's KPAs are inconclusive. As a further perceptual evidence of this, the best recoveries according to the $\unit{RSNR}''$ are reported in Fig. \ref{fig:Img}.

\section{Conclusion}
\noindent In this paper we have analysed known-plaintext attacks as they may be carried out on standard CS schemes with antipodal random encoding matrices as well as on the particular multiclass protocol developed in \cite{cambareri2014low}. In particular, the analysis was carried out from the two perspectives of an eavesdropper and a second-class user trying to guess the true encoding matrix. 
In both cases we have mapped multiclass CS into a collection of subset-sum problems with the aim of counting the candidate encoding matrices that match a given plaintext-ciphertext pair. In the eavesdropper case we have found that for each row the expected number grows as $O(n^{-\frac{1}{2}} \cdot 2^n)$ -- finding the true solution among such huge sets is infeasible. 
A further study of the candidate solutions' Hamming distance from the true one showed that, as the dimensionality $n$ increases, the expected number of solutions close to the true one is only a small fraction of the solution set.
As for the second-class user we have shown that depending on the available information on the true encoding matrix, the expected number of solutions is significantly smaller, yet sufficiently high for large $n$ to reassure that a second-class user will not be able to perform class-upgrade. {Moreover, other class-upgrade attacks based on signal recovery under matrix uncertainty were shown to yield almost identical performances to those of a standard decoding algorithm.}

Finally, we showed some simulated cases of KPAs on real-world signals such as ECG traces and images by running a random search for a solution set corresponding to realistic plaintext-ciphertext pairs, and afterwards tested whether any of the returned candidate solutions could lead to finding the true encoding matrix by testing them on a successive ciphertext.
In all the observed cases, we have found that the decoding performances match the average $\unit{RSNR}$ level prescribed by the multiclass encryption protocol, \ie, both malicious users are unable to successfully decode other plaintexts with significant and stable quality improvements with respect to their available prior information.

\appendices
\section{Proofs on Eavesdropper's KPA}
\label{appSS}

\noindent The following definition is used in Appendices A and C.
\begin{defn}
We define the functions
\begin{eqnarray}
F_p(a,b) & = & \int_0^1 \frac{\xi^p}{1+e^{a\xi-b}} \dd\xi \label{fp}\\
G_p(a,b) & = & \int_0^1 \frac{\xi^p}{\left(1+e^{a\xi-b}\right)\left(1+e^{b-a\xi}\right)} \dd\xi \label{gp}
\end{eqnarray}
\end{defn}

\begin{proof}[Proof of Proposition 1]
Define the binary variables $b_l \in \{0, 1\}$ so that $\sign{x_l} \hat{A}^{(1)}_{j,l} = 2 b_l - 1$ and the positive coefficients $u_l = |x_l|$. With this choice \eqref{eq:pm} is equivalent to $y_j = \sum_{l=0}^{n-1} (2 b_l - 1) u_l $ which leads to a SSP with $\upsilon =  \frac{1}{2}\left(y_j + \sum^{n-1}_{l=0} |x_l|\right)$.
Since we know that each measurement $y_j$ must correspond to the inner product between $x$ and the row $A^{(1)}_{j}$, the latter's entries are straightforwardly mapped to the {\em true} solution of this SSP, $\{\bar{b}_l\}_{l=0}^{n-1}$. 
\end{proof}

\begin{proof}[Proof of Theorem \ref{ExpSolSS}]
Let us first note that, for large $n$, $\upsilon$ in Proposition \ref{prop:SS1} is an integer in the range $\{0, \ldots, \nicefrac{nL}{2}\}$, with the values outside this interval being asymptotically unachievable as $n \rightarrow \infty$ (see \cite[Section 4]{sasamoto2001statistical}). We let $\tau =\nicefrac{\upsilon}{nL}$, $\tau \in [0, \nicefrac{1}{2}]$, and $a(\tau)$ be the solution in $a$ of the equation $\tau=F_1(a,0)$ (\ie\, \cite[(4.2)]{sasamoto2001statistical}) that is unique since $F_p(a,0)$ in \eqref{fp} is monotonically decreasing in $a$. 

From \cite[(4.1)]{sasamoto2001statistical} the number of solutions of a SSP with integer coefficients $\{u_l\}^{n-1}_{l=0}$ uniformly distributed in $\{1, \ldots, L\}$ is
\[
\SEve(\tau,n,L)\asimeq
\dfrac{
e^{
n\left[a\left(\tau\right)\tau+\int_0^1 
\log\left(1+e^{-a\left(\tau\right) \xi}\right)
\dd\xi
\right]
}
}
{
\displaystyle
\sqrt{
2\pi n L^2
G_2(a(\tau),0)
}
}
\] 
that we anticipate to have an approximately Gaussian profile (see Fig. \ref{fig:approx_gauss}). 
\begin{figure}
	\centering
		\includegraphics[height=160pt]{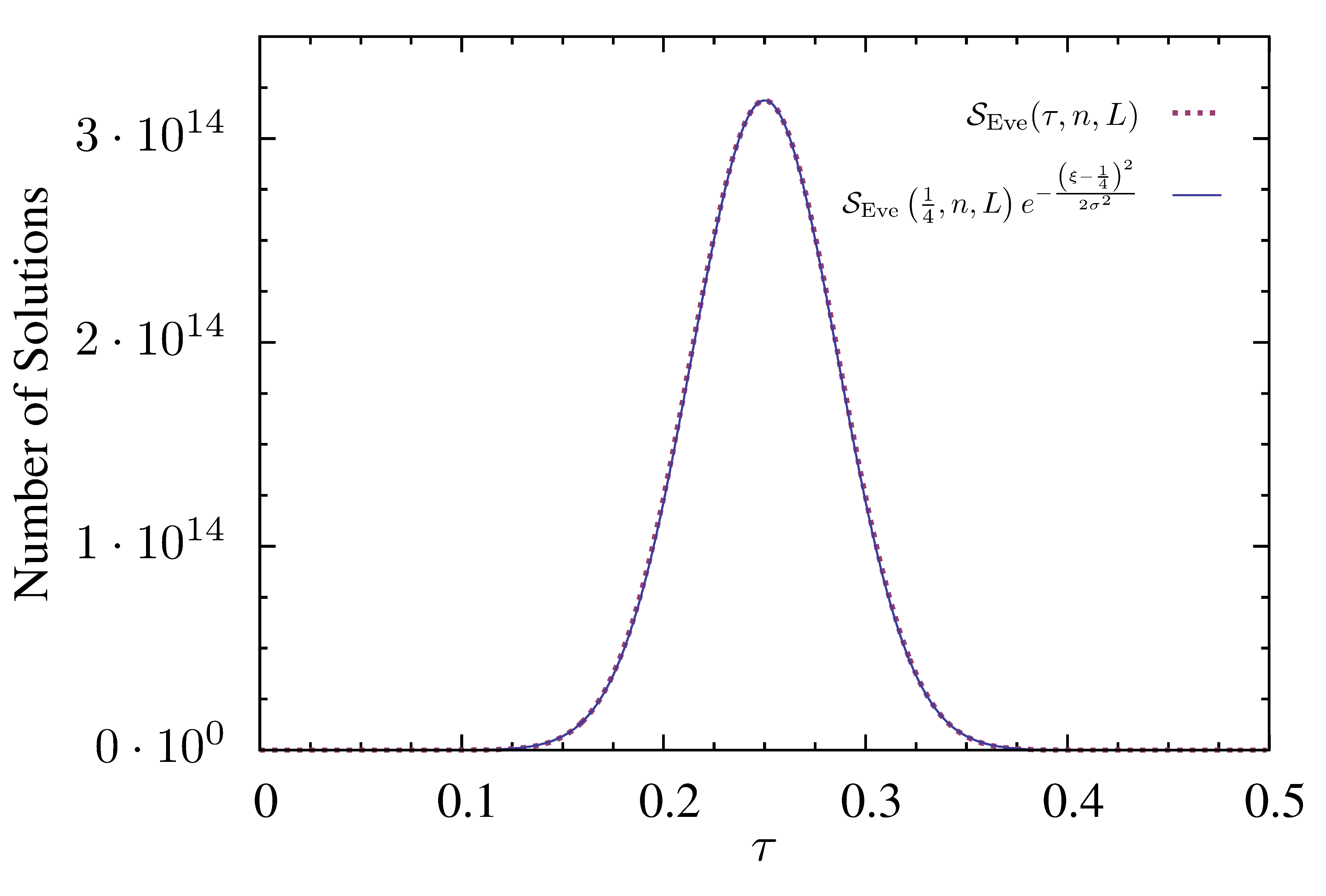}
	\caption{Gaussian approximation of $\SEve(\tau,n,L)$ for $n = 64, L = 10^4$ by letting $\sigma^2 \approx \nicefrac{1}{12 n}$.}
	\label{fig:approx_gauss}
\end{figure}
We now compute the average of $\SEve(\tau,n,L)$ in $\tau$, that clearly depends on the probability of selecting any value of $\upsilon \in \{0, \ldots, \frac{nL}{2}\}$, \ie, of $\tau \in [0, \frac{1}{2}]$. Since it is the result of a linear combination, the probability that a specific value of $\upsilon$ appears in a random instance of the SSP is proportional to the number of solutions associated to it. In normalised terms, the PDF of $\tau$ must be proportional to $\SEve(\tau,n,L)$, \ie, $\tau$ is distributed as
\[
f_\tau(t)=
\frac{1}{\int_0^{\frac{1}{2}}\SEve(\xi,n,L)\dd\xi}
\begin{cases}
\displaystyle \SEve(t,n,L),
 & \text{$0 \le t\le \frac{1}{2}$
}\\
0, & \text{otherwise}
\end{cases}
\]
\noindent With $f_\tau(t)$ we can compute the expected number of solutions:
\begin{equation}
\label{eq:aveS}
\Exp_\tau[\SEve(\tau,n,L)]=\frac{
\displaystyle
\int_0^{\frac{1}{2}}
\SEve^2(\xi,n,L)\dd\xi
}{
\displaystyle
\int_0^{\frac{1}{2}}
\SEve(\xi,n,L)\dd\xi
}
\end{equation}
Although we could resort to numerical integration, \eqref{eq:aveS} can be simplified by exploiting what noted above, \ie, that $\SEve(\tau,n,L)$ has an approximately Gaussian profile in $\tau$ (Fig. \ref{fig:approx_gauss}) with a maximum in $\tau=\nicefrac{1}{4}$. 
Hence, the expectation in $\tau$ becomes 
\begin{IEEEeqnarray}{c}
\Exp_\tau[\SEve(\tau,n,L)]\asimeq
\SEve\left(\frac{1}{4},n,L\right)\dfrac{\displaystyle\int_{-\infty}^{\infty}\left(e^{-\frac{\left({\xi-\frac{1}{4}}\right)^2}{2 \sigma ^2}}\right)^2\dd\xi}{\displaystyle\int_{-\infty}^{\infty}e^{-\frac{\left({\xi-\frac{1}{4}}\right)^2}{2\sigma^2}}\dd\xi}\nonumber \\ 
= \SEve\left(\frac{1}{4},n,L\right)\frac{1}{\sqrt{2}}= \frac{2^n}{L}\sqrt{\frac{3}{\pi n}}
\end{IEEEeqnarray}
that is actually independent of the $\sigma^2$ used in the Gaussian approximation, and in which we have exploited $a(\nicefrac{1}{4})=0$ to obtain the statement of the theorem.
\end{proof}

\section{Hamming Distance of KPA Solutions}
\label{HDsols}
\begin{proof}[Proof of Theorem \ref{ExpSolSSHam}]
\noindent We here concentrate on counting the number of candidate solutions $\{b_l\}_{l=0}^{n-1}$ to Eve's KPA that differ from the true one, $\{\bar{b}_l\}_{l=0}^{n-1}$, by exactly $h$ components (at Hamming distance $h$). We assume that $K\subseteq\{0,\dots,n-1\}$ is the set of indexes for which there is a disagreement, \ie, for all $l\in K$ we have $b_l=1-\bar{b}_l$; this set has cardinality $h$, and is one among $\binom{n}{h}$ possible sets. Since both 
$\{b_l\}_{l=0}^{n-1}$ and $\{\bar{b}_l\}_{l=0}^{n-1}$ are solutions to the same SSP, and that $b_l=\bar{b}_l$ are identical for $l\notin K$, $\sum_{l\in K}\left(1-\bar{b}_l\right) u_l=\sum_{l\in K}\bar{b}_l u_l$ must hold, implying the equality
\begin{equation}
\label{eq:part}
\sum_{\substack{l\in K\\\bar{b}_l=0}}u_l-\sum_{\substack{l\in K\\\bar{b}_l=1}}u_l=0
\end{equation}
Although \eqref{eq:part} recalls the well-known partition problem, in our case $K$ is chosen by each problem instance that sets all $u_l$ and $\bar{b}_l$. Thus, \eqref{eq:part} holds in a number of cases that depends on how many of the $2^h L^h$ possible assignments of all $u_l$ and $\bar{b}_l$ satisfy it. The only feasible cases are for $h>1$, and to analyse them we assume $K=\{0,\dots,h-1\}$ (the disagreements occur in the first $h$ ordered indexes) without loss of generality.

Moreover, when \eqref{eq:part} holds for some $\{\bar{b}_l\}_{l=0}^{n-1}$ it also holds for $\{1-\bar{b}_l\}_{l=0}^{n-1}$. Hence, we may count the configurations that verify \eqref{eq:part} with $\bar{b}_0=0$, knowing that their number will be only {\em half} of the total. With this, the configurations with $\bar{b}_0=0$ must have $\bar{b}_l=1$ for {\em at least} one $l>0$ in order to satisfy \eqref{eq:part}, giving $2^{h-1}-1$ total cases to check. 

The following paragraphs illustrate that, for $h<L$, the number of configurations that verify \eqref{eq:part} can be written as a polynomial of order $h-1$. With this in mind we can start with the explicit computation for $h=\{2, 3\}$. For $h=2$, there is only one feasible assignment for the $\{\bar{b}_l\}_{l=0}^{n-1}$, so $u_0=u_1$ in \eqref{eq:part}, which makes $2L$ cases out of $2^2 L^2$.
For $h=3$, one has $3$ feasible assignments for the $\{\bar{b}_l\}_{l=0}^{n-1}$. Due to the symmetry of \eqref{eq:part} all the configurations have the same behaviour and we may focus on, \eg, $\bar{b}_0=\bar{b}_1=0$ and $\bar{b}_2=1 \Rightarrow u_0+u_1=u_2$; this can be satisfied only when $u_0+u_1\le L$, \ie, for $\frac{L(L-1)}{2}$ configurations. This makes a total of $2 \cdot 3 \cdot  \frac{L(L-1)}{2} = 3L(L-1)$ over the $2^3 L^3$ possible configurations.

For $h>3$, this procedure is much less intuitive; nevertheless, we can at least prove that the function $P_h(L)$ counting the configurations for which \eqref{eq:part} holds is a polynomial in $L$ of degree $h-1$. To show this, let us proceed in three steps.
\begin{enumerate}
\item Indicate with $\pi_{\bar{b}}$ the $(h-1)$-dimensional subspace of $\mathbb{R}^h$ defined by 
$\sum_{\substack{l\in K\\\bar{b}_l=0}}\xi_l-\sum_{\substack{l\in K\\\bar{b}_l=1}}\xi_l=0, \xi \in \mathbb{R}^h$.
The intersection $\alpha_{\bar{b}}(L)=\{1,\ldots, L\}^h\cap\pi_{\bar{b}}$ is such that each assignment of $\{u_l\}_{l=0}^{h-1} \in \{1,\ldots, L\}^h$ satisfying \eqref{eq:part} is an integer point in $\alpha_{\bar{b}}$. To count those points define $\beta_{\bar{b}}(L)=\{0,\ldots, L+1\}\cap\pi_{\bar{b}}$ and note that the number of integer points in $\alpha_{\bar{b}}$ is equal to the number of integer points in the interior of $\beta_{\bar{b}}$ (the points on the frontier of $\beta_{\bar{b}}$ have at least one coordinate that is either $0$ or $L+1$).

Note how $\{0,\ldots,L+1\}^h$ scales linearly with $L+1$ while $\pi_{\bar{b}}$ is a subspace and therefore scale-invariant. Hence, their intersection $\beta_{\bar{b}}(L)$ is an $h-1$-dimensional polytope that scales proportionally to the integer $L+1$, as required by Ehrhart's theorem \cite{ehrhart1967probleme}. The number $E_{\bar{b}}(L)$ of integer points in $\beta_{\bar{b}}(L)$ is then a polynomial in $L+1$ (and so $L$) of degree equal to the dimensionality of $\beta_{\bar{b}}(L)$, \ie, $h-1$. From Ehrhart-Macdonald's reciprocity theorem \cite{macdonald1971polynomials} we know that the number of integer points in the interior of $\beta_{\bar{b}}$ and thus in $\alpha_{\bar{b}}$ is $(-1)^{h-1}E_{\bar{b}}(-L)$, that is also a polynomial in $L$ of degree $h-1$.

\item If two different assignments $\{\bar{b}'_l\}_{l=0}^{h-1}$ and $\{\bar{b}''_l\}_{l=0}^{h-1}$ are considered, then $\alpha_{\bar{b}'}(L)\cap \alpha_{\bar{b}''}(L)=\{1,\ldots,L\}^h\cap\pi_{\bar{b}'}\cap\pi_{\bar{b}''}$. The same argument we used above tells us that the number of integer points in such an intersection is a polynomial in $L$ of degree $h-2$ and, in general that the number of integer points in the intersection of any number of polytopes $\alpha_{\bar{b}}(L)$ is a polynomial of degree not larger than $h-1$.

\item The number of configurations of $\{u_l\}_{l=0}^{h-1}$ and $\{\bar{b}_l\}_{l=0}^{h-1}$ that satisfy \eqref{eq:part} with respect to the above $K$ is the number of integer points in the union of all possible polytopes $\alpha_{\bar{b}}$, \ie,  $\bigcup_{\{\bar{b}_l\}_{l=0}^{h-1}}\alpha_{\bar{b}}(L)$.
Such a number can be computed by the inclusion-exclusion principle that amounts to properly summing and subtracting the number of integer points in those polytopes and their various intersections.
Since sum and subtraction of polynomials yield polynomials of non-increasing degree, we know that number is the evaluation of a polynomial $P_h(L)$ with degree not greater than $h-1$.
\end{enumerate}
Let us then write $P_h(L)=\sum_{j=0}^{h-1}p^h_j L^j$. In order to compute its coefficients $p^h_j$ we may fix a binary configuration $\{\overline{b}_l\}^{h-1}_{l=0}$, count the points $\{u_l\}_{l=0}^{h-1} \in \mathbb{N}_+^h$ for which \eqref{eq:part} is verified by means of integer partition functions (that also have a polynomial expansion), and subtract the points in which $\{u_l\}_{l=0}^{h-1} \notin \{1,\ldots,L\}^h$. By summation over all binary configurations, one can extract the coefficients associated with $L^j$ for each $h$. Table \ref{tab:PhL} reports the result of this procedure as carried out by symbolic computation for $h \leq 15$.
\end{proof}
\section{Proofs on the Class-Upgrade KPA}
\label{appSScon}
\begin{proof}[Proof of Proposition 2] 
{\noindent In this case the attacker knows $(A^{(0)}, x, y)$, and is able to calculate $\varepsilon_j = y_j - \sum^{n-1}_{l=0} A^{(0)}_{j,l} x_l = \sum^{n-1}_{l=0} \Delta A_{j, l} x_l$ where the $\Delta A_{j, l}$ are unknown. For the $j$-th row, the attacker also knows there are $c_j$ non-zero elements in $\Delta A_{j,l} = - 2 A^{(0)}_{j,l} b_l$ with $b_l \in \{0, 1\}$ binary variables that are $1$ if the flipping occurred and $0$ otherwise. Note that from the above information $c_j = \sum^{n-1}_{l=0} b_l$.}
With this we define a set of even weights $D_l = -2 A^{(0)}_{j,l} x_l \in \{-2 L, \ldots,-2,0,2,\ldots, 2 L\}$ so the KPA is defined by satisfying the equalities
\begin{eqnarray}
\label{eq:SSPdpm}
\varepsilon_j & = & \sum^{n-1}_{l = 0} D_l b_l \\
\label{eq:SSPconstrf}
c_j & = & \sum^{n-1}_{l=0} b_l 
\end{eqnarray}
To obtain a standard $\gamma$-SSP with positive weights {and $\gamma = c_j$} we sum $2 L$ to all $D_l$ so \eqref{eq:SSPdpm} becomes $\varepsilon_j + 2  L\sum^{n-1}_{l=0} b_l =\sum^{n-1}_{l=0} (D_l + 2 L) b_l$. Multiplying both sides by $\nicefrac{1}{2}$ and using \eqref{eq:SSPconstrf} yields $\upsilon =  \frac{1}{2} \varepsilon_j  + L c_j = \sum^{n-1}_{l = 0} u_l b_l$ where $u_l=-A^{(0)}_{j,l} x_l +L \in \{0, \ldots, Q\}$. $Q = 2L$. Finally, we exclude $u_l = 0$ to facilitate the attack.
\end{proof}

\begin{proof}[Proof of Theorem \ref{ExpSolSSCon}]
Assume $F_p(a,b)$ and $G_p(a,b)$ as in \eqref{fp},\eqref{gp}. Define the normalised constraint $r=\frac{c_j}{n}$ and two quantities $a(\tau, r)$ and $b(\tau,r)$ that are the solutions of the following system of equalities
\begin{IEEEeqnarray}{c}
r = F_0(a,b) \nonumber \\
\tau = F_1(a,b) \nonumber
\end{IEEEeqnarray}
\noindent that are respectively equivalent to \cite[(5.3-4)]{sasamoto2001statistical}. We also define
\begin{IEEEeqnarray}{c}
\mathcal{G}(\tau, r) = \begin{pmatrix} G_0(a\left(\tau,r\right),b\left(\tau,r\right)) & G_1(a\left(\tau,r\right),b\left(\tau,r\right) \\
G_1(a\left(\tau,r\right),b\left(\tau,r\right) & G_2(a\left(\tau,r\right),b\left(\tau,r\right)) \end{pmatrix} \nonumber 
\end{IEEEeqnarray}
With this, \cite[(5.8-9)]{sasamoto2001statistical} prove that the number of solutions of a $\gamma$-SSP with integer coefficients $\{u_l\}^{n-1}_{l=0}$ uniformly distributed in $\{1, \ldots, Q\}, Q = 2 L, \gamma = c_j$ is
\begin{eqnarray}
\label{eq:eqsystem}
\lefteqn{
\SSteve(\tau,n,L,r) = \frac{
e^{n\left(
a(\tau,r)\tau-b(\tau,r)r\right)}
}
{4 \pi n L 
\sqrt{\det\left(\mathcal{G}(\tau, r)\right)}
}\cdot}\\
\nonumber
& \hspace{30mm} \cdot \, e^{n
\displaystyle\int_0^1
\log\left[1+e^{b\left(\tau,r\right)-a\left(\tau,r\right) \xi}\right]
\dd\xi
}
\end{eqnarray}
\noindent Using the same arguments as in the proof of Theorem \ref{ExpSolSS}, we average on $\tau$ and obtain an expression identical to \eqref{eq:aveS} for the computation of $\Exp_\tau[\SSteve(\tau,n,L,r)]$. 
Since $\SSteve(\tau,n,L,r)$ has once again an approximately Gaussian profile in $\tau$ with a maximum in $\tau=\frac{r}{2}$ we approximate the expectation in $\tau$, 
\begin{IEEEeqnarray}{c}
\label{eq:X2}
\Exp_\tau[\SSteve(\tau,n,L,r)]\asimeq \SSteve\left(\frac{r}{2},n,L,r\right)\frac{1}{\sqrt{2}} \nonumber \\ =
\sqrt{
\frac{3}{2}
} \frac{
r^{-1-n\rho}\left(1-r\right)^{-1-n(1-r)}
}
{
2 \pi n L
}
\end{IEEEeqnarray}
by using the fact that $a\left(\frac{r}{2},r\right)=0$ and $b\left(\frac{r}{2},r\right)=\log\left(\frac{r}{1-r}\right)$.

\end{proof}

\ifCLASSOPTIONcaptionsoff
  \newpage
\fi

\bibliographystyle{IEEEtran}
\bibliography{CompressiveSensingSecure}

\begin{IEEEbiography}
[{\includegraphics[width=1in,height=1.25in,clip,keepaspectratio]{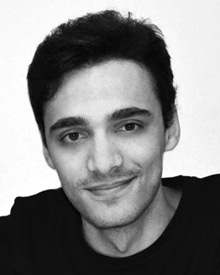}}]{Valerio Cambareri}
(S'13) received the B.S., M.S. (\emph{summa cum laude}), and Ph.D. degrees in Electronic Engineering from the University of Bologna, Italy, in 2008, 2011 and 2015 respectively. From 2012 to 2015, he was a Ph.D. student in Electronics, Telecommunications and Information Technologies at DEI -- University of Bologna, Italy. In 2014 he was a visiting Ph.D. student in the Integrated Imagers team at IMEC, Belgium. His current research activity focuses on statistical and digital signal processing, compressed sensing and computational imaging.
\end{IEEEbiography}%
\begin{IEEEbiography}%
[{\includegraphics[width=1in,height=1.25in,clip,keepaspectratio]{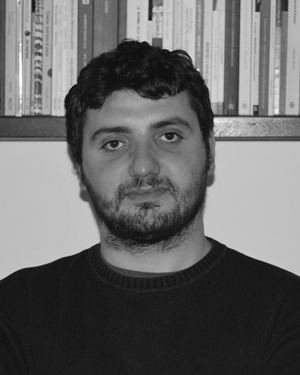}}]
{Mauro Mangia}
(S'09-M'13) received the B.S. and M.S. degree in Electronic Engineering from the University of Bologna, Italy, in 2004 and 2009 respectively; he received the Ph.D. degree in Information Technology from the University of Bologna in 2013. He is currently a post-doc researcher in the statistical signal processing group of ARCES -- University of Bologna, Italy. In 2009 and 2012 he was a visiting Ph.D. student at the \'Ecole Polytechnique F\'ed\'erale de Lausanne (EPFL). 
His research interests are in nonlinear systems, compressed sensing, ultra-wideband systems and system biology. 
He was recipient of the 2013 IEEE CAS Society Guillemin-Cauer Award and the best student paper award at IEEE ISCAS2011.
\end{IEEEbiography}%
\begin{IEEEbiography}%
[{\includegraphics[width=1in,height=1.25in,clip,keepaspectratio]{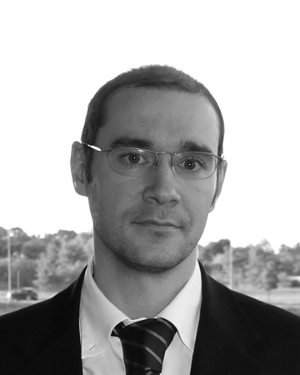}}]
{Fabio Pareschi}
(S'05-M'08) received the Dr. Eng. degree (with honours) in Electronic Engineering from University of Ferrara, Italy, in 2001, and the Ph.D. in Information Technology under the European Doctorate Project (EDITH) from University of Bologna, Italy, in 2007. He is currently an Assistant Professor in the Department of Engineering (ENDIF), University of Ferrara. He is also a faculty member of ARCES -- University of Bologna, Italy. 
He served as Associate Editor for the IEEE Transactions on Circuits and Systems -- Part II (2010-2013). His research activity focuses on analog and mixed-mode electronic circuit design, statistical signal processing, random number generation and testing, and electromagnetic compatibility. He was recipient of the best paper award at ECCTD2005 and the best student paper award at EMCZurich2005.
\end{IEEEbiography}%
\begin{IEEEbiography}
[{\includegraphics[width=1in,height=1.25in,clip,keepaspectratio]{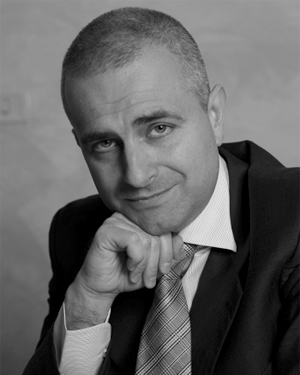}}]
{Riccardo Rovatti}
(M'99-SM'02-F'12) received the M.S. degree in Electronic Engineering and the Ph.D. degree in Electronics, Computer Science, and Telecommunications both from the University of Bologna, Italy in 1992 and 1996, respectively. 
He is now a Full Professor of Electronics at the University of Bologna. 
He is the author of approximately 300 technical contributions to international conferences and journals, and of two volumes. 
His research focuses on mathematical and applicative aspects of statistical signal processing and on the application of statistics to nonlinear dynamical systems. 
He received the 2004 IEEE CAS Society Darlington Award, the 2013 IEEE CAS Society Guillemin-Cauer Award, as well as the best paper award at ECCTD 2005, and the best student paper award at EMC Zurich 2005 and ISCAS 2011. 
He was elected IEEE Fellow in 2012 for contributions to nonlinear and statistical signal processing applied to electronic systems.
\end{IEEEbiography}%
\begin{IEEEbiography}
[{\includegraphics[width=1in,height=1.25in,clip,keepaspectratio]{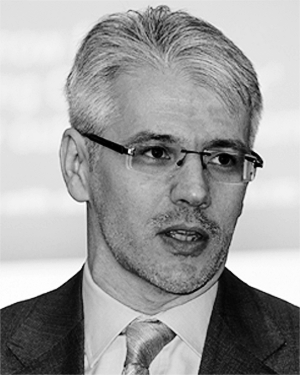}}]
{Gianluca Setti}
(S'89-M'91-SM'02-F'06) received the Ph.D. degree in Electronic Engineering and Computer Science from the University of Bologna in 1997. 
Since 1997 he has been with the School of Engineering at the University of Ferrara, Italy, where he is currently a Professor of Circuit Theory and Analog Electronics and is also a permanent faculty member of ARCES -- University of Bologna, Italy. 
His research interests include nonlinear circuits, implementation and application of chaotic circuits and systems, electromagnetic compatibility, statistical signal processing and biomedical circuits and systems. 
Dr. Setti received the 2013 IEEE CAS Society Meritorious Service Award and co-recipient of the 2004 IEEE CAS Society Darlington Award, of the 2013 IEEE CAS Society Guillemin-Cauer Award, as well as of the best paper award at ECCTD 2005, and the best student paper award at EMC Zurich 2005 and at ISCAS 2011. 
He held several editorial positions and served, in particular, as the Editor-in-Chief for the IEEE TRANSACTIONS ON CIRCUITS AND SYSTEMS -- PART II (2006-2007) and of the IEEE TRANSACTIONS ON CIRCUITS AND SYSTEMS -- PART I (2008-2009). 
Dr. Setti was the Technical Program Co-Chair at ISCAS 2007, ISCAS 2008, ICECS 2012, BioCAS 2013 as well as the General Co-Chair of NOLTA 2006. 
He was a member of the Board of Governors of the IEEE CAS Society (2005-2008), served as its 2010 President, and he is a Distinguished Lecturer of CASS (2015-2016).
He held several other volunteer positions for the IEEE and in 2013-2014 he was the first non North-American Vice President of the IEEE for Publication Services and Products.
\end{IEEEbiography}
\end{document}